\renewcommand*{\algorithmcfname}{Commensurate Computation (CCOM)}
\newif\ifcomments
\renewcommand{\paragraph}[1]{\vspace{0.2em} \noindent \textbf{#1}}
\newcommand{\whp}{with high probability}
\newtheorem{theorem}{Theorem}
\newtheorem{lemma}{Lemma}
\newcommand{\POW}{PoW\xspace}
\newcommand{\AlgA}{\textsc{SybilControl\xspace}}
\newcommand{\AlgB}{\textsc{CCom}\xspace}
\newcommand{\AlgBRS}{\textsc{ECCom}\xspace}
\newcommand{\E}{\textsc{Elastico}\xspace}
\newcommand{\resp}{\textit{responsiveness}\xspace}
\newcommand{\Diffuse}{\textsc{Diffuse}\xspace}
\newcommand{\gnew}{g_{\mbox{\tiny new}}}
\newcommand{\ppuzzle}{purge\ puzzle\xspace}
\newcommand{\epuzzle}{entrance\ puzzle\xspace}
\newcommand{\tstamp}{\mathcal{T}}
\newcommand{\Old}{\mathcal{S}_{\mbox{\tiny old}}}
\newcommand{\Current}{\mathcal{S}}
\newcommand{\sOld}{\mathcal{S}'_{\mbox{\tiny old}}}
\newcommand{\sCurrent}{\mathcal{S'}}   
\newcommand{\qcomm}{committee\xspace}
\newcommand{\defn}[1]{\textbf{\emph{#1}}}
\newcommand{\cgoal}{Committee Goal\xspace}
\newcommand{\sgoal}{Population Goal\xspace}
\begin{document}
\title{Proof of Work Without All the Work:\\ \vspace{-5pt} {\Large  Computationally Efficient Attack-Resistant Systems}}

\date{}

\author[1]{Diksha Gupta}
\author[2]{Jared Saia \thanks{This work is supported by the National Science Foundation grants CNS-1318880 and CCF-1320994.}}
\author[3]{Maxwell Young\thanks{This work is supported by the National Science Foundation grant CCF 1613772 and by a research gift from C Spire.}}
\affil[1]{\small Dept. of Computer Science, University of New Mexico, NM, USA\hspace{6cm} \mbox{\texttt{dgupta@unm.edu}}}
\affil[2]{\small Dept. of Computer Science, University of New Mexico, NM, USA\hspace{6cm} \mbox{\texttt{saia@cs.unm.edu}}}
\affil[3]{\small Computer Science and Engineering Dept., Mississippi State University, MS, USA\hspace{6cm} \texttt{myoung@cse.msstate.edu}}


\maketitle 
\begin{abstract}

Proof-of-work (\POW) is an algorithmic tool used to secure networks by imposing a computational cost on participating devices. Unfortunately, traditional \POW schemes require that correct devices perform computational work perpetually, even when the system is not under attack.

We address this issue by designing a general \POW protocol that ensures two properties. First, the network stays secure.  In particular, the fraction of identities in the system that are controlled by an attacker is always less than $1/2$. Second, our protocol's computational cost is commensurate with the cost of an attacker.  In particular, the total computational cost of correct devices is a linear function of the attacker'€™s computational cost plus the number of correct devices that have joined the system.  Consequently, if the network is attacked, we ensure security with cost that grows linearly with the attacker's cost; and, in the absence of attack, our computational cost remains small.  We prove similar guarantees for bandwidth cost.

Our results hold in a dynamic, decentralized system where participants join and depart over time, and where the total computational power of the attacker is up to a constant fraction of the total computational power of correct devices. We demonstrate how to leverage our results to address important security problems in distributed computing including: Sybil attacks, Byzantine consensus, and Committee election.
\end{abstract}


\clearpage

\section{Introduction}\label{sec:intro}



Twenty-five years after its introduction by  Dwork and Naor~\cite{dwork:pricing}, \defn{proof-of-work (\POW)} is enjoying a research renaissance.  Succinctly, \POW is an economic tool to prevent abuse of a system by requiring participants to solve computational puzzles in order to access system resources or participate in group decision making.  In recent years, \POW~is playing a critical role in cryptocurrencies such as Bitcoin~\cite{nakamoto:bitcoin}, and other blockchain technologies~\cite{litecoin,blockstack,chain1,iota,ethereum,dashcoin,primecoin}. 

Yet, despite success with Bitcoin and its analogs, \POW~has not found widespread application in mitigating a range of malicious behaviors, such as Sybil attacks~\cite{douceur02sybil}, and application-layer distributed denial-of-service (DDoS) attacks. These are well-known and enduring security problems for which \POW~seems well-suited, and yet proposals~\cite{parno2007portcullis,wang:defending,kaiser:kapow,green:reconstructing,feng:design,waters:new,martinovic:wireless,borisov:computational,li:sybilcontrol} built around \POW~have  seen only limited deployment.  


\medskip

\noindent{\bf A Barrier to Widespread Use.} A major impediment to the widespread use of~\POW~is ``the work''. In particular, current \POW approaches have significant computational overhead, given that puzzles must always be solved, {\it even when the system is not under attack}.  This non-stop resource burning translates into a substantial energy --- and, ultimately, a monetary -- cost~\cite{economistBC,coindesk,arstechnica}.\footnote{For example, in 2015, the Economist calculated that Bitcoin consumes at least $1.46$ terawatt-hours of electricity per year, or enough to power $135,000$ American homes~\cite{economistBC}.}  Consequently,  \POW approaches are currently used primarily in applications where participants have a financial incentive to continually perform work, such as cryptocurrencies.  This is a barrier to wide-spread use of a technique that has the potential to fundamentally advance the field of cybersecurity.





In light of this, we seek to reduce the cost of~\POW~systems and focus on the following question: \textbf{Can we design~\POW~systems where the resource costs are low in the absence of attack, and grow commensurately with the effort expended by an attacker?}

In this paper, we design and analyze an algorithm that answers this question in the affirmative. Initially, our result is presented in the context of the Sybil attack; however,  it applies more generally to safeguarding a distributed system from any attack where an adversary seeks to obtain significantly more than its fair share of network resources. 


\subsection{Our Model}\label{sec:model-main}


Our system consists of virtual \defn{identifiers (IDs)}, and an attacker (or \defn{adversary}).  Each ID is either good or bad.   Each \defn{good} ID follows our algorithm, and all \defn{bad} IDs are controlled by the adversary. 
 
\medskip

\noindent{\bf Adversary.} Our assumption that a single adversary controls all the bad IDs pessimistically represents perfect collusion and coordination among the bad IDs. Bad IDs may deviate from our protocols in an arbitrary manner including sending incorrect or spurious messages. This type of adversary encapsulates the celebrated \defn{Sybil attack}~\cite{douceur02sybil}.  Section~\ref{sec:related-work} summarizes the extensive prior literature in this area.

We critically assume that the adversary controls up to an {\boldmath{$\alpha$}} \defn{fraction} of the computational resources in the network, for $\alpha$ a constant that is a parameter of our algorithms.  This assumption is standard in past \POW literature~\cite{nakamoto:bitcoin, andrychowicz2015pow, parno2007portcullis, walfish2010ddos,GiladHMVZ17, miller2014anonymous}.  

Our algorithms employ public key cryptography, and we make the usual cryptographic assumptions needed for this.  We assume that the adversary knows all of our algorithms, but does not know the private random bits of any good ID.

\medskip 
 
\noindent{\bf Communication.} All communication among good IDs occurs through a broadcast primitive, denoted by~{\bf \Diffuse}, which  allows  a  good  ID  to  send  a  value  to all other good IDs within a known and bounded amount of time, despite the presence of an adversary. We assume that when a message is diffused in the network, it is not possible to determine which ID initiated the diffusion of that message. Such a primitive is a standard assumption in \POW schemes~\cite{Garay2015,bitcoinwiki,GiladHMVZ17,Luu:2016}; see~\cite{miller:discovering} for empirical justification.  We assume that each message originating at a good ID is signed by the ID's private key. 


Time is discretized into \defn{rounds}. As a standard assumption, all IDs are assumed to be synchronized, but our algorithms can tolerate a small amount of skew. The length of a round is a positive constant times {\boldmath{$\Delta$}}, where $\Delta>0$ is some constant upper bound on the time to deliver a message. As the value of $\Delta$ is made larger, the communication latency becomes less of a factor; however, computational effort increases.


For simplicity, we initially assume that the time to diffuse a message is small in comparison to the time to solve computational puzzles.\footnote{A recent study of the Bitcoin network shows that the communication latency is 12 seconds in contrast to the 10 minutes block interval~\cite{croman2016scaling}, which motivates our assumption of computational latency dominating the communication latency in our system.} However, we later relax this assumption to account for a network latency which is upper bounded by $\Delta$.

We pessimistically assume that the adversary can send messages to any ID at will, and that it can read the messages diffused by good IDs before sending its own.



\medskip

\noindent{\bf Hashes and Puzzles}.  We assume that all IDs know a \defn{hash function} {\boldmath{$h$}} that maps bit strings to real 
    numbers in $[0,1)$.  We make the standard \defn{random 
    oracle assumption} about $h$~\cite{bellare1993random,canetti1997towards,koblitz2015random,nakamoto:bitcoin}.  
    This assumption is that when first computed on an input, 
    $x$, $h(x)$ is selected independently and uniformly at random from $[0,1)$, and that on subsequent computations of $h(x)$ the 
    same output value is always returned.\footnote{In practice, $h$ is a cryptographic hash function, such as SHA-2~\cite{sha2}, with 
    inputs and outputs of sufficiently large bit lengths.} 
A puzzle solution is typically a certain type of input to the hash function that achieves an output that is sufficiently small. 

Let {\boldmath{$\mu$}} denote the number of hash function queries that a good ID can make per unit time. Hence, we measure the computational power of a good ID in terms of $\mu$.

\medskip

\noindent{\bf Joins and Departures.} The system is dynamic with IDs joining and departing over time (i.e., \defn{churn}), subject to the constraint that at most a constant fraction of the good IDs can join or depart in any round. Maintaining performance guarantees amidst churn is often challenging in decentralized systems~\cite{Augustine:2013:SSD:2486159.2486170,7354403,Augustine2015,Augustine:2012,augustine:fast}.  We pessimistically assume that all join and departure events are scheduled in a worst-case fashion by the adversary. 

 We assume that all good IDs announce their departure to the network.  In practice, this assumption could be relaxed through the use of heartbeat messages that are periodically sent out to indicate that an ID is still in the network.


The minimum number of good IDs in the system at any point is assumed to be at least {\boldmath{$n_0$}}.  Our goal is to provide security and performance guarantees  for {\boldmath{$O(n_0^{\gamma})$} } joins and departures of IDs, for any desired constant {\boldmath{$\gamma$}} $ \geq 1$. In other words, the guarantees on our system hold with high probability (w.h.p.)\footnote{With probability at least $1-n_0^{-c}$ for any desired $c\geq 1$.} over this polynomial number of dynamic events. This implies security despite a system size that may vary wildly over time, increasing and decreasing polynomially in $n_0$ above a minimum number of $n_0$ good IDs. 

We impose a loose constraint on the rate of departures: at most an $\epsilon_0$-fraction of good IDs may join/depart in any single round, where {\boldmath{$\epsilon_0>0$}} is a small, known constant.   Note that this constraint still allows for an amount of dynamism that is linear in the current system size. 


\subsection{Our Goals.}\label{sec:our-problems}

We have two key security goals.\medskip

\noindent{\defn{\sgoal}{\bf:}} Ensure that the fraction of bad IDs in the entire system is always less than $1/2$.
\medskip

Achieving this goal will help ensure that system resources -- such as jobs executed on a server, or bandwidth obtained from a wireless access point -- consumed by the bad IDs are (roughly) proportional to the adversary's computational power.  Possible application domains include: content-sharing overlays~\cite{falkner:profiling,steiner:global}; and open cloud services~\cite{Mohaisen:2013:TDC:2484313.2484332,Anderson:2004:BSP:1032646.1033223,Chandra:2009:NUD:1855533.1855535,WeissmanSGRNC11}. 

\medskip

  
\medskip  

\noindent{\defn{\cgoal}{\bf:}} Ensure there always exists a committee that (i) is known to all good IDs; (ii) is of \emph{scalable} size i.e., of size $\Theta(\log n_0)$; and (iii) contains less than a $1/2$  fraction of bad IDs.


\medskip

Achieving the \cgoal will ensure that the committee can solve the \defn{Byzantine consensus}~problem~\cite{lamport1982byzantine}.  In this problem, each good ID has an initial input bit. The goal is for (i) all good IDs to decide on the same bit; and (ii) this bit to equal the input bit of at least one good ID.  

Byzantine consensus enables participants in a distributed network to reach agreement on a decision, even in the presence of a malicious minority.  Thus, it is a fundamental building block for many cryptocurrencies~\cite{BonneauMCNKF15,eyal2016bitcoin,cryptoeprint:2015:521,GiladHMVZ17}; trustworthy computing~\cite{castro1998practical, castro2002practical,SINTRA,cachin:secure,kotla2007zyzzyva,clement-making, 1529992}; peer-to-peer networks~\cite{oceanweb,adya:farsite}; and databases~\cite{GPD,preguica2008byzantium, zhao2007byzantine}.  Establishing Byzantine consensus via the use of committees is a common  approach; for examples, see~\cite{GiladHMVZ17,Luu:2016,KSSV}. 

\medskip

\noindent We note that it is possible to reduce the fraction $1/2$ in either goal by reducing the value of $\alpha$. Second, we expect the \sgoal to be particularly relevant for small or medium-sized networks, where it is manageable to maintain membership information.  In contrast, the \cgoal may be more relevant for larger networks, since it does not require maintenance of large membership lists.  


\subsection{Main Results}\label{sec:main}

We measure \defn{computational cost} as the effort required to solve computational puzzles (see Section~\ref{sec:puzzles} for details), and we measure \defn{bandwidth cost} as the number of calls to \Diffuse. Let $T_C$ and $T_B$ denote the total computational and total bandwidth costs, respectively, incurred by the adversary.  Let $\gnew$ denote the number of good IDs that have joined the system.  The \defn{lifetime} of the system is defined as the time until at least $ O(n_0^{\gamma})$ i.e., polynomially in $n_0$ join or leave events.

\begin{theorem}\label{thm:main1}
 Assume that the adversary holds at most an $\alpha = 1/6$-fraction of the total computational power of the network. Then, w.h.p. our algorithm (\AlgB in Section \ref{sec:server1}) ensures the following properties hold over the lifetime of the system: \vspace{4pt} 
\begin{itemize}[topsep=0pt,parsep=0pt,partopsep=0pt,leftmargin=13pt,labelwidth=6pt,labelsep=4pt]
\item \sgoal: The fraction of bad IDs in the system is always less than $1/2$.
\item \cgoal: There is a committee always known to all IDs such that at all times the committee  contains (1) a logarithmic number of IDs; and (2) less than a $1/2$  fraction of bad IDs.
\item The cumulative computational cost to the good IDs is $O\left(T_C+ \gnew \right)$. \vspace{2pt}
\item The cumulative bandwidth cost to the good IDs is $\tilde{O}\left(T_B + \gnew \right)$. 
\end{itemize}
\end{theorem}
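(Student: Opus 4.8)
The plan is to establish the two security goals by maintaining a short list of invariants inductively over rounds, and then to bound both costs by an amortized charging argument that pays for every unit of good work using either a join event (contributing to $\gnew$) or adversarial effort (contributing to $T_C$, respectively $T_B$). For the \sgoal, I would first pin down the invariant that \AlgB maintains: that immediately after any purge the bad fraction of the population is at most $\alpha = 1/6$, and that between purges it never exceeds $1/2$. The post-purge bound is the crux. A \ppuzzle forces every ID to solve a fresh puzzle within a fixed window in order to remain; calibrating the window so that the good IDs can just solve their puzzles using their full $(1-\alpha)$-fraction of hashing means the adversary, holding only an $\alpha$-fraction, can keep alive at most an $\alpha/(1-\alpha)$-multiple of the surviving good IDs, which pins the post-purge bad fraction at $\alpha$. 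Between purges the fraction can only grow through bad IDs admitted via the \epuzzle; since a purge is retriggered once the symmetric difference of the membership (tracked by the committee) reaches a small constant fraction of the current size, this growth is bounded, and the gap from $1/6$ up to $1/2$ leaves room to also absorb the $\epsilon_0$-fraction of per-round churn and the concentration error of the next step. Induction on rounds, with base case given by the initial composition, then yields the \sgoal.

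For the \cgoal, I would combine this population bound with concentration. Committee membership is fixed by a puzzle-based election whose outcome the adversary cannot bias beyond its computational share, by the random-oracle assumption on $h$; hence the adversary is expected to hold at most an $\alpha = 1/6$ fraction of the $\Theta(\log n_0)$ committee seats, and a Chernoff bound drives the bad fraction of the committee below $1/2$ \whp. A union bound over the polynomially-many committee reconfigurations occurring during the lifetime keeps the total failure probability below $n_0^{-c}$. That the committee is known to all good IDs is then immediate from \Diffuse: every change to membership is broadcast and delivered within the bounded round length, so all good IDs agree on the current committee $\QCurrent$; its logarithmic size is guaranteed by the election rule itself.

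The heart of the theorem, and the step I expect to be the main obstacle, is the cost bound, which I would obtain by a charging argument in units of one puzzle solution. Good IDs spend work in only two ways: each joining good ID solves one \epuzzle, and each surviving ID solves one \ppuzzle per purge. The entrance work is $O(\gnew)$ directly. For the purge work, the design invariant is that a purge fires exactly when the symmetric difference of the membership since the previous purge reaches a constant fraction of the current size; thus each purge, costing the good IDs $\Theta(|\Current|)$ work, is triggered by $\Theta(|\Current|)$ join-or-depart events, giving a constant ratio of good purge work to triggering events. Summing over all purges, the total good purge work is therefore $O$ of the lifetime-total symmetric difference, which decomposes into good joins and good departures (each $O(\gnew)$, since every good departure was preceded by a join or by the initial population) plus bad joins and bad departures. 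Each bad join cost the adversary at least one \epuzzle solution, and each bad departure was itself preceded by a bad join, so the bad contribution is $O(T_C)$; combining gives cumulative good computational cost $O(T_C + \gnew)$. The genuinely delicate points are: (i) charging each adversarial entrance puzzle at most once across successive purges, which I would enforce with a potential-function accounting that carries forward any yet-unmatched effort; (ii) ruling out ``spurious'' purges that fire without a full block of symmetric-difference events, even under worst-case churn scheduling; and (iii) the calibration of the \ppuzzle window so that the adversarial cost per admitted bad ID dominates the good work that ID induces by a constant factor --- this is exactly where the $\alpha = 1/6$ slack is consumed.

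Finally, the bandwidth bound follows from the same matched charging applied to calls to \Diffuse. Each entrance and each purge round generates only a \poly number of broadcasts --- the extra polylogarithmic factor, and hence the $\tilde{O}$, coming from the committee's internal agreement and reconfiguration traffic --- so the calls split, by the identical accounting, into $\tilde{O}(\gnew)$ charged to honest joins and departures and $\tilde{O}(T_B)$ charged against the adversary's own broadcasts, yielding cumulative good bandwidth cost $\tilde{O}(T_B + \gnew)$.
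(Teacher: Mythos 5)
Your treatment of the \sgoal follows essentially the paper's route (its Lemmas~\ref{lem:bound_b} and~\ref{lem:badlesshalf-full}): a post-purge invariant obtained from the adversary's bounded hash capacity during the purge round (your $\alpha/(1-\alpha)$ bound of $B_i \leq G_i/5$ is in fact tighter than the paper's $B_i < N_i/3$, which suffices), followed by maximizing the ratio of bad to total IDs subject to the trigger constraint $b_i^a + g_i^d \leq N_{i-1}/3$. Your cost argument, however, is a genuinely different route from the paper's. You charge each purge's $\Theta(N_{i-1})$ good work directly to the $\geq N_{i-1}/3$ symmetric-difference events that triggered it, splitting these into good joins/departures (absorbed by $O(\gnew)$) and bad joins, each backed by a distinct \epuzzle (absorbed by $O(T_C)$). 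The paper instead cases on the composition of the departing committee members: if a constant fraction of them are good, then $g_i^d = \Theta(G_{i-1})$ \whp{} because good committee members form a random sample of the good population; if a constant fraction are bad, the adversary must already have spent $\Theta(G_{i-1})$ hash evaluations to win those seats, so $T_{i-1} = \Theta(G_{i-1})$. Your charging is more elementary --- it needs no sampling step and no \whp{} event --- and your delicate points (i) and (ii) are resolved in the paper's setting simply because $\Old$ is reset at every purge, so each join or announced departure contributes to at most one epoch's trigger. One misattribution: your point (iii) claims the $\alpha = 1/6$ slack is consumed by cost calibration, but the cost bound holds for any constant $\alpha$; the $1/6$ is consumed entirely by the committee analysis.

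The genuine gap is in your \cgoal argument. You model the election as independent seats, each won by the adversary with probability roughly $\alpha$, and invoke a single Chernoff bound; but in \AlgB a seat is awarded to the \emph{minimum} hash output inside a membership interval $\left(\frac{1}{2^{(k+1)/d}}, \frac{1}{2^{k/d}}\right]$, so the per-seat win probability is the ratio of bad to total solutions landing in that interval --- a random quantity that itself requires per-interval concentration bounds, and for which concentration fails entirely in the deep intervals $k > d\lg\eta$, where the expected occupancy is $o(1)$. The paper (Lemma~\ref{lem:good-committee}) must pessimistically concede all $O((\gamma+2)\lg n_0)$ such tail seats to the adversary; this is exactly why $d$ must be large relative to $\gamma$ (e.g., $d \geq 20\gamma$), why the good fraction of $\mathcal{C}_0$ is only $7/10$ rather than $1-\alpha$, and why, after the worst-case intra-epoch departure of up to a $1/3$-fraction of (possibly all good) committee members, the surviving margin is the much thinner $11/20$. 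Without the interval occupancy analysis your sketch also cannot certify the $\Theta(\log n_0)$ committee size, which is not ``guaranteed by the election rule itself'' but is a consequence of the same concentration bounds. Separately, ``knowledge is immediate from \Diffuse'' does not hold: the adversary may send its solutions selectively or late to different good IDs, so views of the per-interval minima can diverge; the paper closes this by having the \emph{outgoing} good-majority committee run Byzantine consensus on $\mathcal{C}_0$ and having non-members take the majority of the diffused views --- an inductive use of the committee invariant across epochs that your proof skips, and without which the ``known to all good IDs'' clause of the \cgoal is unproven.
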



Note that the computational and bandwidth costs incurred by the good IDs grow slowly with the cost incurred by the adversary. Recalling our discussion at the beginning of this section, this is precisely the type of result we sought. When there is no attack on the system, the costs are low and solely a function of the number of good IDs; there is no excessive overhead. But as the adversary spends more to attack the system, the costs required to keep the system secure grow commensurately with the adversary's costs. \medskip

\paragraph{Empirical Results and Applications.}  In Section~\ref{sec:empirical}, we empirically evaluate our algorithm. For a variety of networks, our algorithm significantly reduces computational cost.  

In Section~\ref{sec:applications}, we discuss applications of our algorithm to (1) Byzantine consensus; and (2) Elastico, which is a committee election algorithm that can be used to achieve consensus on Blockchains for cryptocurrencies.


\subsection{Related Work} \label{sec:related-work}


A preliminary version of our results appeared as an extended abstract in~\cite{pow-without}. This current work makes several significant extensions to these earlier results. First, we provide a method to greatly reduce the amount of state that needs to be stored by good IDs; see Section~\ref{sec:reducing-state}. Second, we demonstrate that our result is robust to a bounded amount of communication delay; see Section~\ref{sec:bounded-latency}. Third, we extend the empirical evaluation of our algorithm under aggressive attack; see Section~\ref{sec:empirical}.  Additionally, this work contains the full proofs, along with an expanded discussion of related work and future open problems.

\medskip

\noindent{\bf The Sybil Attack.} Our work applies -- but is not limited -- to the Sybil attack~\cite{douceur02sybil}. There is large body of literature on mitigating such attacks (for example, see surveys~\cite{newsome:sybil,mohaisen:sybil,john:soft,Dinger:2006}), and additional work documenting real-world Sybil attacks~\cite{bitcoin-sybil,6503215,Yang:2011:USN:2068816.2068841,Tran:2009:SOC:1558977.1558979}.  {\it Critically, none of these prior results have the desired property that the resource costs to the good IDs grow commensurately with the cost incurred by the adversary.}  We note that this can be characterized as a \defn{resource-competitive} approach~\cite{gilbert:making, gilbert:near,king:conflict,bender:how,ICALP15, daniICJournal17, aggarwal2016secure,gilbert:resource,Bender:2015:RA:2818936.2818949,zamani2017torbricks}. However, for simplicity, we omit a discussion of resource-competitive algorithms since it is not critical to understanding our results.

We note that obtaining large numbers of machines is expensive; computing power costs money, whether obtained via Amazon AWS~\cite{amazon-aws} or a botnet rental~\cite{anderson2013measuring}. This premise is borne out by Bitcoin and other cryptocurrencies where an adversary has a strong incentive to wield as much computational power as possible.


Beyond \POW, several other approaches have been proposed. In a wireless setting with multiple communication channels, Sybil attacks can be mitigated via \emph{radio-resource testing} which relies on the inability of the adversary to listen to many channels simultaneously~\cite{monica:radio,newsome:sybil,gilbert:sybilcast,gilbert:who}.  However, this approach may fail if the adversary can monitor most or all of the channels. Furthermore, the same shortcoming exists: the system must constantly perform tests to guarantee security which leads to wasted bandwidth in the absence of attack.

There are several results that leverage social networks to yield Sybil resistance~\cite{yu:survey,lesniewski-laas:whanau,yu:sybilguard,mohaisen:improving,wei:sybildefender,Yu:2010:Sybillimit}. However, social-network information may not be available in many settings. Another idea is to use network measurements to verify the uniqueness of IDs~\cite{bazzi:distinct,sherr:veracity,1550961,liu:mason,Gil-RSS-15,demirbas:rssi}, but these techniques rely on accurate measurements of latency, signal strength, or round-trip times, for example, and this may not always be possible.  Containment strategies are examined in overlays~\cite{danezis:sybil,scheideler:shell}, but the extent to which good participants are protected from the malicious actions of Sybil IDs is limited.

\medskip

\noindent{\bf \POW and Alternatives.} As a choice for~\POW, computational puzzles provide certain advantages. First, verifying a solution is much easier than solving the puzzle itself. This places the burden of proof on devices who wish to participate in a protocol rather than on a verifier.  In contrast, bandwidth-oriented schemes, such as~\cite{walfish2010ddos}, require verification that a sufficient number of packets have been received before any service is provided to an ID; this requires effort by the verifier that is proportional to the number of packets. 



A recent alternative to \POW is \defn{proof-of-stake (PoS)} where security relies on the adversary holding a minority stake in an abstract finite resource~\cite{abraham:blockchain}. When making a group decision, PoS weights each participant's vote using its share of the resource; for example, the amount of cryptocurrency held by the participant.  A well-known example is ALGORAND~\cite{GiladHMVZ17}, which employs PoS to form a committee.   A hybrid approach using both \POW and PoS has been proposed in the Ethereum system~\cite{ethereum-pos}.  We note that PoS can only be used in systems where the ``stake" of each participant is globally known.  Thus, PoS is typically used only in cryptocurrency applications. 

Recent work by Pass et. al~\cite{pass2016hybrid} proposes a consensus protocol for a dynamic cryptocurrency system, where they use the blockchain protocol to elect a new committee everyday. The committee runs a Byzantine Fault Tolerant (BFT) protocol on the transactions to generate a daily log, which is logged in the blockchain. They introduce a new performance measure called \resp~that is applicable to a cryptocurrency system only if it possesses the ability to log a transaction input to an honest ID within some function of the actual delay rather than that of a loose upper bound on the delay in the system. 

\medskip

\noindent Finally, when the number of bad participants is {\it assumed} to be always bounded, a number of results on adversarial fault-tolerant systems exist~\cite{castro:byzantine,cachin:secure,adya:farsite,cowling:hq,kubiatowicz:oceanstore,rodrigues:rosebud,rodrigues:design,rodrigues:large,halo:kapadia,salsa:nambiar,bortnikov:brahms,johansen:fireflies}.

\section{Our Algorithm}\label{sec:server1} 


In this section, we describe the main ingredients that go into our algorithm. 
 

\subsection{Computational Puzzles}\label{sec:puzzles}

All IDs have access to a hash function, {\boldmath{$h$}}, about which we make the standard \emph{random oracle assumption}~\cite{bellare1993random,canetti1997towards,koblitz2015random}.  Succinctly, this assumption is that when first computed on an input, $x$, $h(x)$ is selected independently and uniformly at random from the output domain, and that on subsequent computations of $h(x)$ the same output value is always returned.  We assume that both the input and output domains are the real numbers between $0$ and $1$.  In practice, $h$ may be a cryptographic hash function, such as SHA-2~\cite{sha2}, with inputs and outputs of sufficiently large bit lengths.  

In general, an ID must find an input $x$ such that $h(x)$ is less than some threshold.  Decreasing this threshold value will increase the difficulty, since one must compute the hash function on more inputs to find an output that is sufficiently small.   We note that many other types of puzzles exist (for examples, see~\cite{Rangasamy2012,karame:low,jerschow:modular}) and our results are likely compatible with other designs.

We assume that each good ID can perform $\mu$ hash-function evaluations per round for $\mu>0$. Additionally, we assume that $\mu$ is of some size polynomial in $n_0$ so that $\log \mu = \Theta(\log n_0)$.  It is reasonable to assume large $\mu$ since, in practice, the number of evaluations that can be performed per second is on the order of millions to low billions~\cite{hashcat,mining,non-spec}.

For any integer $\rho \geq 1$, we define a {\boldmath{$\rho$}}\defn{-round puzzle} to consist of finding $\ell = C\log \mu$ solutions, each of difficulty $\tau = \rho(1-\delta)\mu / (C \log \mu)$, where $\delta>0$ is a small constant and $C$ is a sufficiently large constant depending on $\delta$ and $\mu$.

Let $X$ be a random variable giving the expected number of hash evaluations needed to compute $\ell$ solutions.  Then $X$ is a negative binomial random variable and we have the following concentration bound (see, for example, Lemma 2.2 in~\cite{awerbuch:towards}).

For every $0 < \epsilon \leq 1$, it holds that 
$$Pr(|X - E(X)| \geq \epsilon E(X)) \leq 2e^{- \epsilon^2 \ell / (2(1+\epsilon))}$$

Given the above, we can show that every good ID will solve a $\rho$-round puzzle with at most $\rho\mu$ hash function evaluations, and that the adversary must compute at least $(1-2\delta)d\rho\mu$ hash evaluations to solve every $\rho$-round puzzle. This follows from a union bound over $O(n_0^{\gamma})$ joins and departure events, for $C$ sufficiently large as a function of $\delta$ and $\gamma$.  Note that for small $\delta$, the difference in computational cost is negligible, and that $\mu$ is also unnecessary in comparing costs.  Thus, for ease of exposition, {\bf we assume that each {\boldmath{$\rho$}}-round puzzle requires computational cost {\boldmath{$\rho$}} to solve}.

Finally, each participant $v$ uses a \defn{public key} {\boldmath{$K_v$}}, generated via public key encryption, as its ID.  The input to a puzzle always incorporates  $K_v$ and  $s$, where the \defn{solution string} {\boldmath{$s$}} is selected by $v$ (for good IDs, $s$ will be a random string).


\subsection{How Puzzles Are Used}\label{sec:how-used}

Although, each puzzle is constructed in the same manner, they are used in two distinct ways by our algorithm. First, when a new ID wishes to join the system, it must provide a solution for a $1$-round puzzle; this is referred to as  \defn{\epuzzle}. Here, the input to the puzzle is {\boldmath{$K_v || s || \tstamp$}}, where {\boldmath{$\tstamp$}} is the timestamp of when the puzzle solution was generated. In order to be verified, the value $\tstamp$ in the solution to an entrance puzzle must be within some small margin of the current time.  In practice, this margin would primarily depend on network latency. 


We note that, in the case of a bad ID, this solution may have been precomputed by the adversary by using a future timestamp. This is not a problem since the purpose of this puzzle is only to force the adversary to incur a computational cost at some point, and to deter the adversary from reusing puzzle solutions. Importantly, the \epuzzle is not used to preserve guarantees on the fraction of good IDs in the system.

The second way in which puzzles are used is to limit the fraction of bad IDs in the system; this is referred to as a \defn{\ppuzzle}.  An announcement is periodically made that \emph{all} IDs {\it already in the system} should solve a 1-round puzzle. When this occurs, a \defn{random string} {\boldmath{$r$}} of $\Theta(\log n_0^{\gamma})$ bits is generated and included as part of the announcement. The value $r$ must also be appended to the inputs for all requested solutions in this round; that is, the input is {\boldmath{$K_v || s || r$}}.  These random bits ensure that the adversary cannot engage in a pre-computation attack~\textemdash~where it solves puzzles and stores the solutions far in advance of launching an attack~\textemdash~by keeping the puzzles unpredictable. For ease of exposition, we omit further discussion of this issue and consider the use of these random bits as implicit whenever a \ppuzzle is issued. 

While the same $r$ is used in the puzzle construction for all IDs, we emphasize that a {\it different} puzzle is assigned to each ID since the public key used in the construction is unique. Again, this is only of importance to the second way in which puzzles are used. Using the public key in the puzzle construction also prevents puzzle solutions from being stolen. That is, ID $K_v$ cannot lay claim to a solution found by ID $K_w$ since the solution is tied to the public key $K_w$. 

Can a message $m_v$ from ID $K_v$ be spoofed? This is prevented in the following manner.  ID $K_v$ signs $m_v$ with its private key to get $\texttt{sign}_v$, and then sends $(m_v || \texttt{sign}_v || K_v )$ via \Diffuse. Any other ID can use $K_v$ to check that the message was signed by the ID $K_v$ and thus be assured that ID $K_v$ is the sender. 


\begin{figure}[t!]
\begin{minipage}[h]{1.0\linewidth}
\vspace{1pt}\begin{algorithm}[H]
\smallskip 
\hspace{-7pt}\noindent{}{$\Old$: set of IDs after most recent purge.}\\
\hspace{-7pt}\noindent{}{$S$: current set of IDs}.\\
\hspace{-7pt}\noindent{}{$r$: current random seed}.\\

\smallskip
\hspace{-7pt}\noindent Do the following forever; the committee makes all decisions using Byzantine Consensus:
\begin{enumerate}[leftmargin=6pt]
	\item Each joining ID, $v$, generates a public key; solves an \epuzzle using that key and the current timestamp $\tstamp$ and broadcasts the solution via \Diffuse. Upon verifying the solution, the committee\\ adds $v$ to $\Current$.

	\item If $|(S \cup \Old) - (S \cap \Old)| \geq |\Old|/3$, then  the current committee does the following:\label{s:testStep}
	\begin{itemize}[leftmargin=12pt]
		\item Generate a random string $r$ and broadcast via \Diffuse.
		\item Sets $\Old$ and $\Current$ to the set of IDs that return valid solutions.
		\item Sets new committee  to be the set of IDs submitting the smallest solutions in the range $\left(\frac{1}{2^{(k+1)/d}}, \frac{1}{2^{k/d}} \right]$ for some $k\geq 1$, and diffuses this new committee to all good IDs.
\end{itemize}
\end{enumerate}
\caption{}
\label{alg:ccom}
\end{algorithm}
\end{minipage}
\vspace{-3pt}\caption{Pseudocode for \AlgB.}\label{fig:ccom}
\end{figure}

\subsection{Algorithm Overview}\label{sec:overview}


We begin by describing our algorithm \emph{{\underline{C}}ommensurate {\underline{Com}}putation} (\AlgB). The pseudocode is provided in Figure~\ref{fig:ccom}. A key component of our system is a subset of IDs called a \qcomm for which the \cgoal (recall Section~\ref{sec:our-problems}) must hold; this is proved in Section~\ref{sec:committee-goodness}. 

\subsubsection{Updating and Maintaining System Membership}\label{sec:sys-mem}

The \qcomm tracks the membership in the system using the set {\boldmath{$\Current$}}. Updates to this set are performed in the following manner. Each ID that wishes to join the system must solve an \epuzzle. Each good ID in the \qcomm will check the solution to the \epuzzle and, upon verification, the joining ID is allowed into the system. 

Verification of a puzzle solution requires checking that (1) all $C\log \mu$ inputs to $h$ submitted generate an output that is at most $(C\log \mu)/((1-\delta)\mu)$; and (2) each of these inputs contains the string $r$ and also the individual public key of the ID. Upon verification, the good IDs in the committee solve BC in order to agree on the contents of $\Current$.  Those IDs that fail to submit valid puzzle solutions are denied entrance into the system.

Similarly,  $\Current$ is also updated and agreed upon when a good ID informs the server that it is departing. Of course, bad IDs may not provide such a notification and, therefore, $\Current$ is not necessarily accurate at all times. 

Finally, if we wish to achieve only the Committee Goal, then the last portion of Line 1 is amended: there is no need for the committee to verify solutions or maintain $\Current$.


\subsubsection{Executing Purges} 

At the beginning of the system, the \qcomm knows the existing membership denoted by $\Old$; assume {\boldmath{$|\Old|=n_0$}} initially.  At some point, $|(\Current \cup \Old) - (\Current \cap \Old)| \geq |\Old|/3$. When this happens, Step 2 is immediately triggered, whereby all IDs are issued a  purge puzzle and each ID must respond with a valid solution within $1$ round. The issuing of these \ppuzzle{}s is performed by the \qcomm via the diffusing of $r$ to all IDs; this random string is created via solving BC in order to overcome malicious behavior by bad IDs that are members.\footnote{Agreeing on each random bit of $r$ will suffice. Alternatively, a secure multiparty protocol for generating random values can be used; for example, the result in~\cite{srinathan_pandu_rangan:efficient}.} 

Recall from Section~\ref{sec:model-main} that a round is of sufficient length that the computation time to solve a $1$-round puzzle dominates the round trip communication time between the client and server (we remove this assumption later in Section~\ref{sec:bounded-latency}). The good IDs update and agree on $\Current$ at the end of the purge.   

Figure~\ref{f:overview} illustrates this purging process;  good IDs are black and bad IDs are red.   Let $x$ be the number of IDs after the most recent purge; at most $x/3$ of these IDs are bad, since $\alpha=1/3$. According to Step 2 in \AlgB, the next purge occurs immediately when the number of new IDs entering would become greater than or equal to $x/3$.  At the point when a purge is triggered, the number of new IDs that have been allowed to enter is less than $x/3$.  Thus, the total number of bad IDs is less than $\frac{2}{3}x$, and the total number of good IDs is at least $\frac{2}{3}x$. 


How is $r$ sent such that each good ID trusts it came from the committee? Recall that a  participant $v$ that joins the system generates a public key, $K_v$, and the corresponding private key, $k_v$; the public key is used as its ID. If ID $K_v$ is a committee member, it will sign the random string $r$ using its private key $k_v$ to obtain $\texttt{sign}_v$. Then, $(r || \texttt{sign}_v || K_v)$ is sent using \Diffuse.  Any good ID can verify $\texttt{sign}_v$ via $K_v$ to ensure it returns $r$. This implicitly occurs in Step 2 of Algorithm \ref{fig:ccom}, but we omit the details in the pseudocode for simplicity.

\medskip


\subsubsection{Updating and Maintaining Committee Membership}\label{sec:com-mem}

Recall from Section \ref{sec:model-main} that good IDs always inform all other IDs of their departure. Thus, a good ID $K_v$ that is a committee member will inform other committee members of its departure. Since bad IDs may not inform of their departure, our estimate of churn will not be accurate but this will not jeopardize the committee goal.



\begin{figure}[t]
\centering
\vspace{-3.6cm}  \includegraphics[scale=0.54]{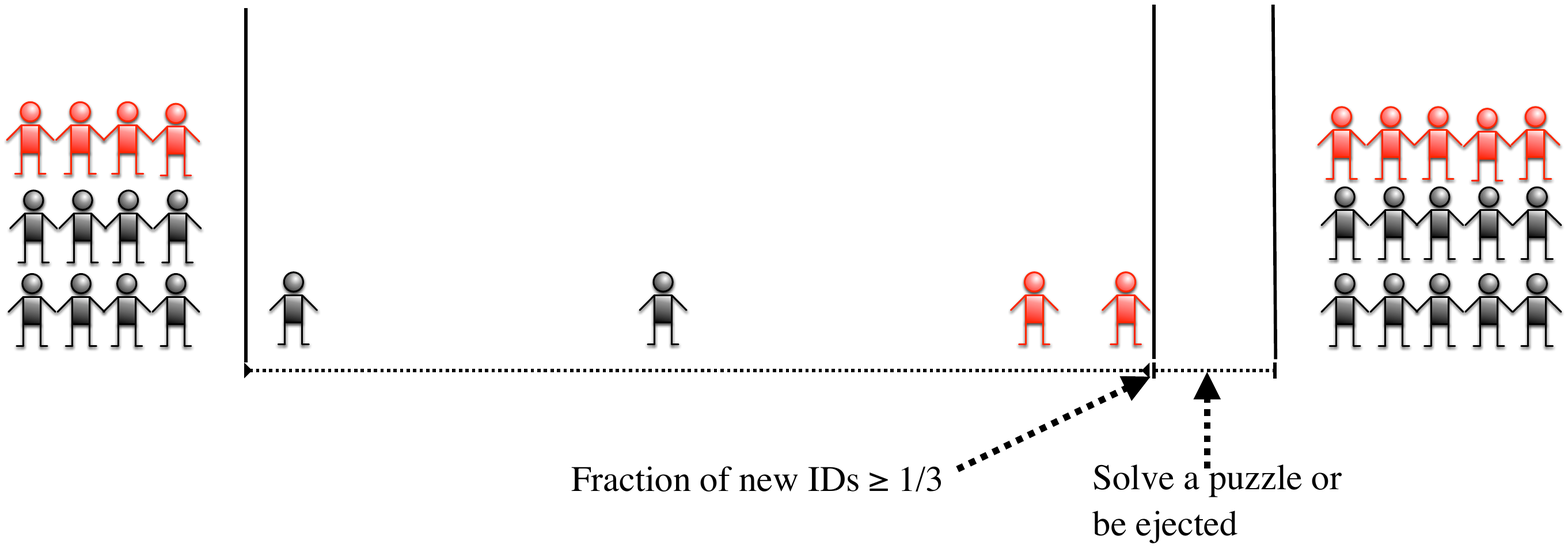}
\vspace{-3.2cm}  \caption{An overview of our approach. The system starts with $x=12$ IDs of which $4$ are bad and $8$ are good. When the $(x/3)^{th} = 4^{th}$ ID joins, a purge is immediately executed and all IDs must solve a \ppuzzle or be ejected from the system. } 
  \label{f:overview} 
\end{figure}

The execution of \AlgB~is conceptually broken into sets of consecutive rounds called \defn{epochs}, which are delineated by the execution of Step 2.  Over the lifetime of the system, the \qcomm~should always satisfy the \cgoal. As joins and departures occur, this invariant may become endangered. To address this issue, \qcomm{}s are disbanded and rebuilt over time. In particular,  a new \qcomm is completed and ready to be used by the start of a each successive epoch, and the currently-used \qcomm is disbanded.

The construction of a new \qcomm occurs in Step 2 over a single round. There is a sequence of \defn{membership intervals} $\left( \frac{1}{2^{(k+1)/d}}, \frac{1}{2^{k/d}} \right]$  for {\boldmath{$k\geq 1$}} and where {\boldmath{$d$}} is a constant we can set subject to $d\geq 20\gamma$ (this is required for the proof of Lemma~\ref{lem:good-committee}). For each interval, the ID that generates the smallest hash function output in the $k^{\mbox{\tiny th}}$ interval becomes the $k^{\mbox{\tiny th}}$ \qcomm~member; this determination is handled by the current \qcomm. In Section~\ref{sec:committee-goodness}, we prove that this process results in a committee of size $\Theta(\log n_0)$, which is a criterion of the \cgoal.

Given that the adversary has a minority of the computational power, we expect the newly-formed committee to have a good majority.  Byzantine consensus allows for agreement on the members of the new committee, and this is communicated to all good IDs in the network via \Diffuse.  These two properties satisfy the other two criteria of the \cgoal; this is argued formally in Section~\ref{sec:committee-goodness}.

Over the round during which this construction is taking place, the existing committee still satisfies the \cgoal; this is proved in Lemma~\ref{lem:good-committee}. After the round completes, the new committee members are known to all good IDs in the system. Those committee members that were present prior to this construction are no longer recognized by the good IDs. Finally, any messages that are received after this single round are considered late and discarded.

\subsubsection{Initialization of Committee}

Finally, we address the initialization of a \qcomm at the time the system is first created.  To do this, we must solve the \emph{view reconciliation} problem.  In this problem, we start out in our model with some unknown set of good IDs, and an adversary that controls an $\alpha$ fraction of the total computational power.  The goal is to ensure that (1) all good IDs agree on a set of IDs; and (2) this set contains all of the good IDs, and a number of bad IDs that is at most an $\alpha$ fraction of the set size.

Several algorithms to solve this problem have been recently described.  In~\cite{andrychowicz2015pow}, Andrychowicz and Dziembowski showed how to solve the problem, even for any fixed $\alpha<1$.  In~\cite{katz2014pseudonymous}, Katz, Miller and Shi described an algorithm to solve the problem when $\alpha<1/2$.  Both of these protocols require time that is $\Theta(n)$, and a number of calls to \Diffuse by good IDs that is $\Theta(n^2)$.  In a recent result~\cite{hou2017randomized}, Hou et al. show how to improve these costs substantially for the case where $\alpha<1/4$.  They describe an algorithm to solve view reconciliation that takes time $\Theta \left(\frac{\log n}{\log \log n} \right)$, and that requires a number of calls to \Diffuse that is $\Theta \left(n\frac{\log n}{\log \log n} \right)$.

Because we assume $\alpha \leq 1/6$, we are able to use the protocol of~\cite{hou2017randomized} for initialization of our system.  We stress that we only need to call this protocol once, at system creation.


\subsection{Analysis of the \cgoal}\label{sec:committee-goodness}
 
In this section, we prove that \whp~\AlgB~preserves the \cgoal over the lifetime of the system. For any epoch $i$, we let {\boldmath{$B_i$}} and {\boldmath{$G_i$}} respectively denote the number of bad and good IDs in the system at the end of epoch $i$, and we let {\boldmath{$N_i = B_i + G_i$}}. Recall that $B_0 < N_0/3$. We also assume that at the time of system creation, the fraction of bad IDs is less than $3/10$; in particular, $B_0 < (3/10)N_0$.

To simplify our presentation, our claims are proved to hold with probability at least $1-\tilde{O}(1/n_0^{\gamma + 2})$, where $\tilde{O}$ hides a poly($\log n_0$) factor.  Of course, we wish the claims of Theorem~\ref{thm:main1} to hold with probability at least $1- 1/n_0^{\gamma+1}$ such that a union bound over $n_0^{\gamma}$ joins and departures yields a w.h.p. guarantee.  By providing this ``slack'' of an $\tilde{\Omega}(1/n_0)$-factor in each of the guarantees of this section, we demonstrate this is feasible while avoiding an analysis cluttered with specific settings for the constants used in our arguments. 


Throughout the following, {\boldmath{$\mathcal{C}_{\mbox{\tiny t}}$}} denotes the membership of \qcomm in round $t\geq0$ of the current epoch.  In particular, {\boldmath{$\mathcal{C}_{\mbox{\tiny 0}}$}} denotes the membership of \qcomm upon its creation at the beginning of the current epoch.  We now prove that the committee goal is met over the duration of any epoch.


Throughout, we assume  $\mu = n_0^{\ell}$ for some constant $\ell \geq 1$, since $\mu$ is a polynomial in $n_0$ (recall Section~\ref{sec:model-main}). The next lemma assumes bounds of $\alpha \leq 1/6$. Although a larger value of $\alpha$ may be tolerable,  this bound is chosen in order to simplify the analysis and provide a clean presentation.

\begin{lemma}\label{lem:good-committee}
Let  $\alpha \leq 1/6$, $d$ be a sufficiently large constant depending on $\gamma$, and $n_0$ be sufficiently large.  Then for any fixed epoch, with probability at least $1-\tilde{O}(1/n_0^{\gamma+2})$, the following holds.  For any round $t\geq0$ in the epoch, $\mathcal{C}_{\mbox{\tiny t}}$ has size $\Theta(\log n_0)$, contains a majority of good IDs, and is known to all good IDs. 
\end{lemma}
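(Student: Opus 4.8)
The plan is to decouple the statement into two regimes: the freshly rebuilt committee $\mathcal{C}_0$ at the start of the epoch, whose properties are purely a statement about the random hash outputs produced during the purge, and the evolution $\mathcal{C}_0 \supseteq \mathcal{C}_1 \supseteq \cdots$ as churn removes members during the epoch. For the first regime I would work directly in the random-oracle model: during a purge every ID solves a $1$-round \ppuzzle, so each ID's evaluations are i.i.d.\ $\mathrm{Uniform}[0,1)$ outputs, an ID contributes a number of outputs proportional to its computational power, good IDs contribute at least a $(1-\alpha)$-fraction and the adversary at most an $\alpha$-fraction of the $\Theta(N_0\mu)$ total evaluations, and the $k$-th committee slot is the ID owning the smallest submitted output inside the membership interval $I_k=(1/2^{(k+1)/d},1/2^{k/d}]$.

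\textbf{Size.} First I would compute the expected number of submitted solutions (outputs below the puzzle threshold $\Theta((\log\mu)/\mu)$) landing in $I_k$; this is $\Theta(N_0\mu\,2^{-k/d})$, which exceeds a fixed constant precisely for indices up to $k_{\max}=\Theta(d\log(N_0\mu))$ and lies below the threshold only for $k\ge k_{\min}=\Theta(d\log(\mu/\log\mu))$. A Chernoff bound shows each $I_k$ with expected count at least a constant is nonempty (and no slot exists past $k_{\max}$), so the number of occupied slots is $k_{\max}-k_{\min}=\Theta(d\log N_0)=\Theta(\log n_0)$, using that $d$ is constant and $\mu=\mathrm{poly}(n_0)$; a union bound over the $O(\log n_0)$ intervals keeps the failure probability at $\tilde{O}(1/n_0^{\gamma+2})$.

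\textbf{Majority.} Fix $I_k$ and condition on the multiset of outputs it contains; by symmetry of the random oracle the smallest of them belongs to a good ID with probability equal to the good fraction of in-interval outputs, which is at least $1-\alpha\ge 5/6$ in expectation, and since the adversary's total budget is at most an $\alpha$-fraction of all evaluations it cannot simultaneously win more than an $\alpha$-like fraction of the intervals. Modeling the per-slot indicators ``winner is good'' as Bernoulli trials with success probability at least $1-\alpha$, independent across the disjoint intervals, a Chernoff bound over the $\Theta(\log n_0)$ slots shows a strong majority---say at least a $2/3$-fraction---of $\mathcal{C}_0$ is good with probability $1-\tilde{O}(1/n_0^{\gamma+2})$. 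This is exactly where the hypothesis $d\ge 20\gamma$ enters: it forces $|\mathcal{C}_0|\ge c\,\gamma\log n_0$, making the Chernoff exponent large enough to beat the target exponent $\gamma+2$.

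\textbf{Known-to-all and maintenance.} That $\mathcal{C}_0$ is known to all good IDs follows inductively: the committee of the previous epoch had a good majority (the base case being the initialization by the cited view-reconciliation protocol), so it correctly agrees on and \Diffuse{}s $\mathcal{C}_0$; and since good members announce departures via \Diffuse, all good IDs agree on $\mathcal{C}_t$ in every round. The remaining and, I expect, hardest point is that size and majority persist for \emph{every} $t\ge 0$: within an epoch the committee only loses members, so I must bound how many good committee members can depart before the next purge triggers at symmetric difference $|\Old|/3$. The $2/3$ initial majority gives slack to lose a constant fraction of good slots while retaining a majority, but a worst-case adversary targeting the (public) committee could in principle schedule all $\Theta(\log n_0)$ good members to depart using a negligible part of its $\Theta(N_0)$ churn budget. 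The crux is therefore to argue this cannot happen---e.g.\ because the fresh committee randomness (the seed $r$) is independent of the departure schedule, so the number of good committee members in the adversary's departure set concentrates around $(\text{churn})\times(|\mathcal{C}_0|/N_0)=o(\log n_0)$; failing that, one refills a vacated $k$-th slot with the next-smallest in-interval solution (known to the committee from the purge), which is again good w.h.p.\ and preserves both $|\mathcal{C}_t|=\Theta(\log n_0)$ and the good majority. I would then close with a union bound over the $\le|\Old|/3=\mathrm{poly}(n_0)$ events of the epoch, all absorbed into the $\tilde{O}(1/n_0^{\gamma+2})$ slack.
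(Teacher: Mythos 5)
Your skeleton matches the paper's proof closely---per-interval counting of i.i.d.\ uniform hash outputs, Chernoff concentration on in-interval counts, the symmetry argument that the smallest output in an interval is good with probability equal to the good fraction of in-interval outputs (giving $\approx 5/6$ under $\alpha\le 1/6$), per-slot Bernoulli indicators plus Chernoff for the majority, and induction via Byzantine consensus and \Diffuse for known-to-all. But your size/majority step has a genuine gap in the treatment of the small membership intervals. You claim ``no slot exists past $k_{\max}$'' via Chernoff, where $k_{\max}$ is defined by the expected count dropping below a constant. At the target failure probability $\tilde{O}(1/n_0^{\gamma+2})$ this is false: intervals whose expected count lies between $n_0^{-(\gamma+2)}$ and $\Theta(\log n_0)$ are each occupied with non-negligible probability, there are $\Theta(d(\gamma+2)\log n_0)$ of them, and a Chernoff bound certifies anything at this probability level only when the mean is $\Omega(\gamma\log n_0)$, not constant. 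The paper's proof is structured around exactly this: it takes reliable estimates only for $k\le d\lg\eta$, with $\eta$ chosen so all means are $\Omega((\gamma+2)\ln n_0)$, and then \emph{pessimistically awards every occupied tail slot to the adversary}, bounding these by $2(\gamma+2)\lg n_0$ w.h.p. This is the second, unacknowledged role of $d\ge c\gamma$ in your write-up: the good fraction of $\mathcal{C}_0$ is computed as $(1-2\epsilon')(5/6)\,d\lg\eta\,/\,\bigl(d\lg\eta+2(\gamma+2)\lg n_0\bigr)\ge 7/10$, i.e., $d$ must be large enough that the main slots swamp the adversarial tail, in addition to boosting the Chernoff exponent (the only role you assign it). Without the tail accounting your claimed $2/3$ majority for $\mathcal{C}_0$ is not established.

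On within-epoch persistence, you correctly flag the adaptive-departure issue that the paper itself elides: it simply asserts that at most $|\mathcal{C}_0|/3$ committee members depart per epoch, implicitly treating churn as oblivious to the (public) committee---the same committee-as-random-sample reasoning it uses in its cost analysis. Your proposed rationale (committee randomness independent of the departure schedule) is the right repair, but your arithmetic under it is wrong: with a churn budget of up to $|\Old|/3$ and the committee a uniform $\Theta(\log n_0)$-sized sample, the expected number of departed committee members is about $|\mathcal{C}_0|/3=\Theta(\log n_0)$, not $o(\log n_0)$. Combined with your $2/3$ constant, the worst case of all-good committee departures leaves a residual good fraction of $(3/2)(2/3)-1/2=1/2$ exactly---not a strict majority. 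The paper's constant $7/10$ exists precisely to survive this loss: $(3/2)(7/10)-1/2=11/20>1/2$, followed by a separate Chernoff step (requiring $\epsilon_0<1/20$) for departures during the single round of new-committee formation, which you omit. Finally, your fallback of refilling a vacated $k$-th slot with the next-smallest in-interval solution modifies the algorithm---\AlgB as specified never refills within an epoch---so it cannot be used to prove the lemma as stated.
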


\begin{proof} 
Recall that for an ID $K_v$ to become a committee member, it must obtain the {\it smallest} value (via a hash-function evaluation) in $\left(\frac{1}{2^{(k+1)/d}}, \frac{1}{2^{k/d}}\right]$ for some  integer $k\geq 1$. For any $k \geq 1$, let the indicator random variable $X_{v,k} =1$ if ID $K_v$ finds a value in the $k^{\mbox{\tiny th}}$ membership interval, although {\it not} necessarily the smallest value in this interval;  otherwise, $X_{v,k}=0$. 

Fix some an epoch $i$.  Let the set of good IDs present at the beginning of that epoch be denoted by $\mathcal{G}$.  These are the good IDs that may become committee members.  Let the random variable $X_{\mathcal{G},k} = \sum_{K_{v, k}\in \mathcal{G}} X_{v, k}$ which counts the number of values by good IDs that land in the $k^{\mbox{\tiny th}}$ membership interval.

Let the set of all bad IDs present at any point in epoch $i$ be denoted by $\mathcal{B}$. Define $X_{\mathcal{B},k} = \sum_{K_{v, k} \in \mathcal{B}} X_{v, k}$ which counts the number of values by bad IDs that land in the $k^{\mbox{\tiny th}}$ membership interval.

\medskip


\noindent{\underline{\it Expected Value Calculation.}} Fix some $k \geq 1$.  We have:\vspace{-10pt}

$$E[X_{\mathcal{G},k}] = E\left[ \sum_{K_{v}\in \mathcal{G}} X_{v,k}\right] \geq |\mathcal{G}|\frac{\mu}{2^{k/d}} =  G_{i-1}\left(\frac{n_0^\ell}{2^{k/d}}\right)$$ 
\noindent Similarly:\vspace{-4pt}
$$E[X_{\mathcal{B},k}] = E\left[ \sum_{K_v \in \mathcal{B}} X_{v, k}\right] \leq  |\mathcal{B}|\frac{\mu}{ 2^{k/d}} 
\leq \left(\frac{G_{i-1}}{5}\right)\left(\frac{n_0^{\ell}}{ 2^{k/d}} \right) $$ 

\noindent where the last inequality follows from noting that $\alpha = \frac{|\mathcal{B}|}{|\mathcal{B}| + |\mathcal{G}|} \leq 1/6$ implies $|\mathcal{B}| \leq  |\mathcal{G}|/5$.  

\medskip

\noindent\underline{{\it Concentration Bounds on $X_{\mathcal{G}, k}$.}} For a fixed $k \geq 1$, the $X_{v, k}$ variables are independent over different values of $v$.  Thus, a Chernoff bound tightly bounds $X_{\mathcal{G},k}$ such that for any $\delta>0$, we have:\vspace{-3pt}
\begin{eqnarray*}
Pr\left(X_{\mathcal{G}, k} < (1-\delta) \frac{G_{i-1}n_0^{\ell}}{2^{k/d}}\right) & \leq & \exp\left(-\frac{\delta^2 G_{i-1}\, n_0^{\ell}}{2^{(k/d)+1}}  \right)
\end{eqnarray*}
\noindent Let $\eta = G_{i-1}n_0^{\ell}/(\kappa(\gamma+2)\ln n_0)$ for a sufficiently large constant $\kappa>0$.  For the good IDs, we consider $k$ over the range from $1$ to $d\lg\eta$. In this range:\vspace{-3pt}
\begin{eqnarray*}
\exp\left(-\frac{\delta^2 G_{i-1}\, n_0^{\ell}}{2^{(k/d)+1}}  \right)&  \leq & \exp\left(-(\delta^2/2)\kappa(\gamma+2)\ln n_0\right)\\
& = & O\left(1/n_0^{\gamma + 2}\right) 
\end{eqnarray*}
\noindent for $\kappa\geq 2/\delta^2$. Taking a union bound over all $d\lg\eta$ intervals, the following is true with probability at least $1-\tilde{O}(1/n_0^{\gamma+2})$. For all $1 \leq k\leq d\lg\eta$, the number of good IDs in the $k^{th}$ interval is at least $(1-\delta) G_{i-1}\mu/2^{k/d}$. 

A similar calculation yields upper bounds on $X_{\mathcal{G}, k}$ with the same probabilistic guarantee for each interval over this range of $k$:\vspace{-3pt}
\begin{eqnarray*}
Pr\left(X_{\mathcal{G}, k} > (1+\delta) \frac{G_{i-1} n_0^{\ell}}{2^{k/d}}\right) & = & O\left(1/n_0^{\gamma + 2}\right) 
\end{eqnarray*}
 
Taking a union bound over over all $d\lg\eta$ intervals, the following is true with probability at least $1-\tilde{O}(1/n_0^{\gamma+2})$: for all $1 \leq k\leq d\lg\eta$, the number of puzzle solutions from good IDs in the $k^{th}$ interval is at most $(1+\delta') G_{i-1}\mu/2^{k/d}$.

\medskip\medskip

\noindent\underline{{\it Concentration Bounds on $X_{\mathcal{B}, k}$.}} Over the range of $1 \leq k\leq d\lg\eta$, a similar argument proves that: \vspace{-5pt}

 $$(1 - \delta)\left(\frac{G_{i-1}}{5}\right)\left(\frac{n_0^{\ell}}{2^{k/d}}\right) \leq X_{\mathcal{B},k} \leq (1 + \delta)\left(\frac{G_{i-1}}{5}\right)\left(\frac{n_0^{\ell}}{2^{k/d}}\right)$$
 
\noindent with probability at least $1-O(1/n_0^{\gamma+2})$.  A similar union bound shows that the above holds with probability $1-\tilde{O}(1/n_0^{\gamma+2})$ for all $1 \leq k\leq d\lg\eta$.

However, can the adversary obtain values in membership intervals for $k>d\lg(\eta)$? For such intervals, we expect only a small number of values and we pessimistically assume that, if one exists, it belongs to the adversary rather than a good ID.

Each such membership interval has size less than $1/(2^j \eta)$ for $j\geq 1$. We wish to know a value $j$ such that, with probability $1-O(1/n_0^{\gamma})$, the adversary does not obtain a puzzle solution in the corresponding interval, nor in any of the subsequent (smaller) intervals.  This will bound the number of committee members the adversary obtains from the membership intervals for $k>d\lg(\eta)$. We solve for $j$ in $n_0^{\ell}/(2^j \eta) \leq 1/n_0^{\gamma+2}$ which yields
\begin{eqnarray*}
j &\geq& (\ell + \gamma+2) \lg n_0 - \lg \eta\\
 &=& (\gamma+2)\lg n_0 - \lg(G_{i-1}) + \lg(\kappa(\gamma+2)\ln n_0)  
\end{eqnarray*}
Since $G_{i-1} \geq n_0$, and for sufficiently large $n_0$, it follows that $\lg(G_{i-1}) \geq  \lg(\kappa(\gamma+2)\ln n_0)$. Therefore, with probability at least $1-O(1/n_0^{\gamma+2})$, the adversary gains at most $(\gamma+2) \lg n_0$ additional committee members.  

\medskip


\noindent\underline{{\it Size of Committee.}}  We note that the same argument used above to bound the number of bad IDs for membership intervals with $k>d\lg(\eta)$ also applies to good IDs. By this fact, and the above bounds, with probability at least $1-\tilde{O}(1/n_0^{\gamma+2})$ the following holds:\vspace{-5pt} 

$$ d\lg(\eta)  \leq |\mathcal{C}_{\mbox{\tiny 0}}| \leq d\lg(\eta) + 2(\gamma+2)\lg(n_0)$$

\noindent By the definition of $\eta$, and given that $\ell$ and $\gamma$ are constants, it follows that $ |\mathcal{C}_{\mbox{\tiny 0}}| =  \Theta(\log n_0)$. Finally, since the committee size can only decrease, and further it decreases by at most a $1/3$-factor within an epoch, it follows that  $|\mathcal{C}_{\mbox{\tiny t}}| = \Theta(\log n_0)$ for any round $t > 0$.

\medskip
\medskip

\noindent\underline{{\it  Fraction of Good Committee IDs at the Beginning of Epoch.}} We begin by analyzing the fraction of good IDs in the committee at the beginning of the epoch. Each hash-function evaluation that falls in the $k^{\mbox{\tiny th}}$ membership interval has the same probability as any other of being the smallest. Therefore, pulling the above arguments together: with probability at least $1-\tilde{O}(1/n^{\gamma+2})$, the probability that a good ID obtains the smallest output in the $k^{\mbox{\tiny th}}$ interval for $1\leq k \leq d\lg(\eta)$ is at least:\vspace{-12pt}

\begin{eqnarray*}
\frac{  (1 - \delta) G_{i-1} n_0^{\ell}/2^{k/d}   }{  (1 + \delta) G_{i-1}n_0^{\ell}/2^{k/d} +  (1 + \delta)(G_{i-1}/5) n_0^{\ell}/ 2^{k/d}  } \hspace{-5pt} & \geq & \hspace{-5pt} (1-\epsilon')\frac{5}{6}
\end{eqnarray*}

\noindent for any $\epsilon' > 0$ provided that $\delta$ is sufficiently small. 

To obtain a lower bound on the number of good IDs in $\mathcal{C}_{\mbox{\tiny 0}}$, define the indicator random variable $Y_k=1$  if some good ID has the smallest output in the $k^{\mbox{\tiny th}}$ interval; otherwise, $Y_k=0$.  Let $Y = \sum_{k=1}^{d\lg(\eta)} Y_k$.  This implies a lower bound on the expected number of good IDs: \vspace{-12pt}

\begin{eqnarray*}
E[Y] \hspace{-5pt} &= & \hspace{-8pt} E\left[\sum_{k=1}^{d\lg(\eta)} Y_k\right]  = \sum_{k=1}^{d\lg(\eta)} E[Y_k]\\
\hspace{-5pt} &\geq& \hspace{-8pt}\sum_{k = 1}^{d\lg(\eta)} (1-\epsilon') (5/6)= (1-\epsilon')(5/6)d\lg(\eta)
\end{eqnarray*}


We now make two observations. First, the $Y_k$ values are independent, thus using a Chernoff bound, we can obtain a tight bound on the number of good IDs in  $\mathcal{C}_{\mbox{\tiny 0}}$:

\begin{eqnarray*}
Pr \left( Y < {(1-2\epsilon')}(5/6)d\lg(\eta)\right) \hspace{-5pt} & =& Pr \left( Y < \left(1-\frac{\epsilon'}{1-\epsilon'}\right)E(Y)\right) \\
&\leq& \hspace{-8pt} \exp\left({-\frac{\epsilon'^2}{2(1-\epsilon')}\left(\frac{5\,d \lg \eta}{6}\right)}\right)\\
\hspace{-5pt} &= & \hspace{-8pt}O(1/n_0^{\gamma + 2})
\end{eqnarray*}

where the last equality holds for $d \geq \frac{(1-\epsilon')12(\gamma + 2)}{\epsilon'^2 5\lg(e)}$.


Second, we derived above that, with probability at least $1-\tilde{O}(1/n^{\gamma+2})$, $|\mathcal{C}_{\mbox{\tiny 0}}| \leq d\lg(\eta) + 2(\gamma+2)\lg n_0$. Therefore, using our first observation, with probability at least $1-\tilde{O}(1/n^{\gamma+2})$,  the fraction of good IDs in $\mathcal{C}_{\mbox{\tiny 0}}$ is at least: 
\begin{eqnarray*}
\frac{(1-2\epsilon')(5/6)d\lg(\eta)}{d\lg(\eta) + 2(\gamma+2)\lg(n_0)} \geq \frac{(1-2\epsilon')(5/6)d\lg(\eta)}{d\lg(\eta) + 2(\gamma+2)\lg(\eta)}  \geq 7/10\\
\end{eqnarray*}

\noindent where the second inequality holds for any $d \geq \frac{21(\gamma+2)}{(2-25\epsilon')}$. Therefore, for sufficiently large $d$,  $\mathcal{C}_{\mbox{\tiny 0}}$ has at least a $7/10$-fraction of good IDs.

The existing committee members come to agreement on $\mathcal{C}_{\mbox{\tiny 0}}$ by executing a Byzantine Consensus protocol that succeeds with probability at least $1-O(1/n_0^{\gamma+2})$. Each committee member propagates its view of  $\mathcal{C}_{\mbox{\tiny 0}}$ to all remaining IDs in the network via \Diffuse.  A good ID not in the committee will take the majority value of these different views and, since the committee has a good majority, the majority value will be $\mathcal{C}_{\mbox{\tiny 0}}$. Therefore, the newly-formed $\mathcal{C}_{\mbox{\tiny 0}}$ is known to all good IDs.\medskip

\noindent\underline{{\it  Fraction of Good Committee IDs at Any Point in the Epoch.}} What about the fraction of good committee IDs over the entire epoch? Over the epoch, at most $|\mathcal{C}_{\mbox{\tiny 0}}|/3$ IDs can depart. In the worst case, these are all good IDs. 
Let $X$ denote the number of good IDs in $\mathcal{C}_{\mbox{\tiny 0}}$, let $Y \geq X$ denote the total number of IDs in $\mathcal{C}_{\mbox{\tiny 0}}$, and let $Z \leq Y/3$ be the number of good IDs that have left.
Then, $\frac{X - Z}{Y - Z} \geq \frac{X - (Y/3)}{(2/3)Y} = (3/2)(X/Y) - (1/2) \geq (3/2)(7/10) - (1/2) = 11/20$.  Where $X/Y \geq 7/10$ holds with probability at least $1-\tilde{O}(1/n_0^{\gamma+2})$ from the argument above. 

Recall that at most a constant $\epsilon_0$-fraction of good IDs may depart over the single round in which the new committee is being formed during Step 2 of \AlgB (see our model for joins and departures described in Section~\ref{sec:model-main}). By a Chernoff bound, for $\epsilon_0<1/20$, the fraction of good IDs that leave the committee is tightly bounded such that the fraction of good IDs remaining in the committee exceeds $1/2$ with probability at least $1-O(1/n_0^{\gamma+2})$ during this final round. Therefore, for the current epoch, a majority of good IDs exists in $\mathcal{C}_{\mbox{\tiny t}}$ until a new committee has been created. \medskip

\noindent\underline{{\it  Conclusion.}} By the above steps of our argument, with probability $1-\tilde{O}(1/n_0^{\gamma+2})$, over all rounds $t\geq 0$ in a fixed epoch, the following properties hold for $\mathcal{C}_{\mbox{\tiny t}}$: it has size $\Theta(\log n_0)$; it contains a majority of good IDs; and it is known to all good IDs in the system.
\end{proof}


\subsection{Analysis of the \sgoal}\label{sec:full-dynamism}

\begin{lemma}\label{lem:bound_b}
For all $i \geq 0$,  $B_i < N_i/3$.
\end{lemma}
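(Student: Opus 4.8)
The plan is to prove this by induction on the epoch index $i$, exploiting the fact that each purge ``resets'' the bad fraction to a small value determined by the adversary's computational budget, and that the trigger condition of Step~2 caps how far the bad fraction can drift before the next reset. The base case $i=0$ is immediate from the stated assumption $B_0 < (3/10)N_0 < N_0/3$, and the extra slack between $3/10$ and $1/3$ is exactly the kind of cushion one wants to carry through the inductive step. For the inductive step, I would fix epoch $i$ and analyze the two forces acting on the bad population: the purge at the start of the epoch (the one that ended epoch $i-1$), which drives the bad fraction down, and the churn during the epoch, which pushes it back up until Step~2 fires again.

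First I would establish the post-purge guarantee. When a purge fires, every ID must return a valid solution to a fresh $1$-round \ppuzzle within a single round, where freshness is enforced by the random seed $r$ (ruling out precomputation, as discussed in Section~\ref{sec:how-used}). Every good ID solves its puzzle w.h.p.\ by the negative-binomial concentration bound of Section~\ref{sec:puzzles}, so \emph{all} good IDs survive; meanwhile the adversary, holding at most an $\alpha=1/6$ fraction of the total hashing power, can command at most $\tfrac{\alpha}{1-\alpha}=\tfrac15$ of the good IDs' per-round budget, hence can supply valid solutions for at most a $1/5$ fraction as many bad IDs as there are good IDs. This bounds the bad fraction of $\Old$ immediately after the purge by $\tfrac{1/5}{1+1/5}=1/6$. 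Next I would track the epoch itself: the purge ending epoch $i$ is triggered the instant $|(\Current\cup\Old)-(\Current\cap\Old)| \ge |\Old|/3$, so the symmetric difference is at most (essentially) $|\Old|/3$ at that moment. Decomposing it into joins ($\Current\setminus\Old$) and tracked departures ($\Old\setminus\Current$), the adversary's best move is to make all new arrivals bad and to suppress good joins; I would then write $B_i$ as (bad survivors of the purge) plus (bad joins), write $G_i$ as (good survivors) minus (good departures), and bound $B_i/N_i$, using that good departures both shrink $G_i$ \emph{and} consume part of the $|\Old|/3$ trigger budget (so they cannot help the adversary on both fronts simultaneously).

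The main obstacle is this final accounting step, because the two adversarial levers are in tension and the target fraction $1/3$ is tight relative to the $1/3$ symmetric-difference threshold. Concretely, bounding the post-purge bad fraction by $1/6$ and then allowing up to a $|\Old|/3$ injection of fresh bad IDs is exactly where the constants must be balanced against the permitted departures, and a naive worst case (retain all post-purge bad IDs \emph{and} inject the full $|\Old|/3$ with no departures) appears to push the ratio above $1/3$ unless the interaction with the trigger and with the $\alpha=1/6$ budget is used carefully. I therefore expect the delicate part to be (i) pinning down precisely how departures constrain the number of admitted bad IDs through the shared symmetric-difference budget, and (ii) invoking the $\alpha=1/6$ bound not once but consistently across the epoch, so that the $3/10$ slack in the hypothesis is enough to absorb the $\tilde{O}(1/n_0^{\gamma+2})$ failure probability of the underlying concentration bounds when the result is unioned over the $O(n_0^\gamma)$ churn events.
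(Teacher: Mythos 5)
You have misread what $B_i$ and $N_i$ denote, and this sends the second half of your proposal down a path that cannot succeed. In the paper, $B_i$, $G_i$, $N_i$ are the counts \emph{at the end of epoch $i$}, i.e., immediately \emph{after} the Step-2 purge that terminates the epoch --- a post-purge snapshot, not a bound holding throughout the epoch. Consequently your first paragraph's ``post-purge guarantee'' is already essentially the entire proof, and it is in fact the paper's proof: every ID counted in $B_i$ must return a fresh solution to the \ppuzzle{} within the single purge round (freshness via $r$ ruling out precomputation, as you note), the adversary's computational power is less than $G_i/2$ (with $\alpha \le 1/6$ you even get $B_i \le G_i/5$, matching your $1/6$ fraction, which is stronger than needed), and then $B_i < G_i/2 = (N_i - B_i)/2$ gives $3B_i < N_i$, i.e., $B_i < N_i/3$. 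No induction is used: for $i>0$ the bound is re-derived from scratch at each purge via the computational-power constraint, and only the base case $i=0$ invokes the hypothesis on the initial population.

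The mid-epoch accounting you flag as ``the delicate part'' is not delicate --- it is impossible at the $1/3$ threshold, and your own ``naive worst case'' correctly detects this: starting from a post-purge bad fraction of up to $1/3$ (or even your sharper $1/6$) and injecting up to $|\Old|/3$ fresh bad IDs, the in-epoch bad fraction can legitimately exceed $1/3$ and approach $1/2$. That is exactly why the paper splits the argument into two lemmas: Lemma~\ref{lem:bound_b} asserts only the post-purge $1/3$ bound, and the subsequent Lemma~\ref{lem:badlesshalf-full} performs the churn accounting you attempt --- maximizing $\frac{B_{i-1}+b_i^a}{N_{i-1}+b_i^a-g_i^d}$ subject to $b_i^a + g_i^d \le N_{i-1}/3$ --- and proves only the weaker bound of $1/2$ at arbitrary times, using this lemma as its input. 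So the fix is not to balance constants more carefully or to absorb failure probabilities in the $3/10$ slack; it is to recognize that the statement concerns the post-purge instant, delete the churn analysis, and promote your first paragraph to the proof. (Your union-bound worry is also moot here: after the setup of Section~\ref{sec:puzzles}, the paper treats the cost of a $\rho$-round puzzle as deterministically $\rho$, so the proof of this particular lemma carries no probabilistic slack.)
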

\begin{proof}
This is true by assumption for $i=0$.  For $i>0$, note that the adversary has computational power less than $G_i/2$.  Thus, after Step 2 that ends epoch $i$, we have $B_i < G_i/2$. Adding $B_i/2$ to both sides of this inequality yields $(3/2) B_i < N_i/2$, from which, $B_i < N_i/3$.
\end{proof}

Let {\boldmath{$n_i^a, g_i^a, b_i^a$}} denote the total, good, and bad IDs that arrive over epoch $i$. Similarly, let  {\boldmath{$n_i^d, g_i^d, b_i^d $}} denote the total, good, and bad IDs that depart over epoch $i$. 

Note that the \qcomm will always have an accurate value for all of these variables except for possibly $b_i^d$ --- recall from Section~\ref{sec:model-main} that bad IDs do not need to give notification when they depart --- and, consequently, $n_i^d$; in these two cases, the \qcomm may hold values which are underestimates of the true values. 

\begin{lemma}\label{lem:badlesshalf-full}
The fraction of bad IDs is always at most $1/2$. 
\end{lemma}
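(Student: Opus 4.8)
The plan is to reduce the bad fraction at an arbitrary time $t$ to the controlled state at the \emph{start} of the epoch containing $t$. Fix $t$ and let it lie in epoch $i$. Immediately after the purge that ends epoch $i-1$, Lemma~\ref{lem:bound_b} gives $B_{i-1} < N_{i-1}/3$ (for the very first epoch I instead invoke the initialization assumption $B_0 < (3/10)N_0 < N_0/3$), so the epoch opens with more than $(2/3)N_{i-1}$ good IDs. The bad fraction can only grow during the epoch through bad arrivals and good departures, and the key leverage is Step~2: a purge is forced the instant the committee-observed symmetric difference $|(\Current \cup \Old) - (\Current \cap \Old)|$ reaches $|\Old|/3 = N_{i-1}/3$. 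Since $t$ precedes that purge, this observed quantity is strictly below $N_{i-1}/3$.

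Next I would pin down the adversary's worst case. Let $\hat b^a,\hat g^a,\hat g^d,\hat b^d$ count the bad arrivals, good arrivals, good departures, and bad departures occurring in epoch $i$ up to time $t$. Two monotonicity observations identify the extremal configuration. Any good arrival only enlarges the good population and lowers the bad fraction, so the worst case sets $\hat g^a = 0$. Any bad departure removes a bad ID, and since replacing a ratio's (numerator, denominator) $=(b,n)$ with $(b-1,n-1)$ strictly decreases it whenever $b<n$, the worst case sets $\hat b^d = 0$. With $\hat g^a = \hat b^d = 0$ every arrival is a bad ID with a fresh key, hence in $\Current \setminus \Old$, and every departure is a good ID of $\Old$, hence announced and in $\Old \setminus \Current$; thus the observed symmetric difference equals exactly $\hat b^a + \hat g^d$, and the pre-purge condition reads $\hat b^a + \hat g^d < N_{i-1}/3$.

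Finally I would combine the two bounds. In this worst case the true bad count is $B_{i-1} + \hat b^a$ and the true total is $N_{i-1} + \hat b^a - \hat g^d$, so the bad fraction is below $1/2$ precisely when $2B_{i-1} + \hat b^a + \hat g^d < N_{i-1}$. Using $B_{i-1} < N_{i-1}/3$ and $\hat b^a + \hat g^d < N_{i-1}/3$ yields $2B_{i-1} + (\hat b^a + \hat g^d) < 2N_{i-1}/3 + N_{i-1}/3 = N_{i-1}$, which closes the argument; note that how the churn budget $N_{i-1}/3$ is split between arrivals and departures never matters, since only the sum $\hat b^a + \hat g^d$ appears.

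The step I expect to be the main obstacle is justifying the reduction in the second paragraph, i.e., that the committee's inaccurate bookkeeping cannot be exploited. Because bad IDs may depart silently, the committee can underestimate $b_i^d$ (and hence $n_i^d$), so the observed symmetric difference can undercount the true churn and delay a purge. The care required is to argue that this never helps the adversary: each silent bad departure strictly lowers the true bad fraction by the $(b-1,n-1)$ monotonicity, and the purge threshold $N_{i-1}/3$ is frozen at the epoch's start rather than tied to the possibly-inflated current estimate $|\Current|$. This is what licenses restricting attention to the clean $\hat g^a = \hat b^d = 0$ configuration.
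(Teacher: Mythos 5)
Your proposal is correct and follows essentially the same route as the paper's proof: both combine the epoch-start bound $B_{i-1} < N_{i-1}/3$ from Lemma~\ref{lem:bound_b} with the churn budget $b_i^a + g_i^d \le N_{i-1}/3$ imposed by the Step~2 trigger, then verify the ratio $\frac{B_{i-1}+b_i^a}{N_{i-1}+b_i^a-g_i^d}$ cannot exceed $1/2$, and both close with the same observation that silent bad departures are harmless because the key constraint is independent of $b_i^d$. Your version merely makes explicit two points the paper leaves implicit (the monotonicity reduction to the worst case $\hat g^a = \hat b^d = 0$, and the algebraic equivalence $2B_{i-1}+\hat b^a+\hat g^d < N_{i-1}$ in place of the paper's constrained maximization), which is a presentational difference, not a different argument.
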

\begin{proof}
Fix some epoch $i>0$.  Recall that an epoch ends when $|(\Current \cup \Old) - (\Current \cap \Old)| \geq |\Old|/3$ where $|\Old| = N_{i-1}$. Therefore, we have $b_i^a + g_i^d \leq N_{i-1}/3$.  We are interested in the maximum value of the ratio of bad IDs to total IDs at any point during the epoch.  Thus, we pessimistically assume all additions of bad IDs and removals of good IDs come first in the epoch.  We are then interested in the maximum value of the ratio:
$$\frac{B_{i-1} + b_i^a}{N_{i-1} + b_i^a - g_i^d}.$$  
By Lemma~\ref{lem:bound_b}, $B_{i-1} \leq N_{i-1}/3$.
 Thus, the we want to find the maximum of $\frac{N_{i-1}/3 + b_i^a}{N_{i-1} + b_i^a - g_i^d}$, subject to the constraint that $b_i^a + g_i^d \leq N_{i-1}/3$.  
This ratio is maximized when the constraint achieves equality, that is when $g_i^d = N_{i-1}/3 - b_i^a$.  Plugging this back into the ratio, we get
\begin{eqnarray*}
\frac{N_{i-1}/3 + b_i^a}{N_{i-1} + b_i^a - g_i^d}
& \leq & \frac{N_{i-1}/3 + b_i^a}{2N_{i-1}/3 + 2b_i^a}\\
& = & 1/2
\end{eqnarray*}
Therefore, the maximum fraction of bad IDs is $1/2$ at any point during any arbitrary epoch $i>0$.

Finally, we note that this argument is valid even though $\Current$ may not account for bad IDs that have departed {\it without} notifying the \qcomm (recall this is possible as stated in Section~\ref{sec:model-main}). Intuitively, this is not a problem since such departures can only lower the fraction of bad IDs in the system; formally, the critical equation in the above argument is $b_i^a + g_i^d \leq N_{i-1}/3$, and this does not depend on $b_i^d$.
\end{proof}


\subsection{Cost Analysis and Proof of Theorem~\ref{thm:main1}}

We now examine the cost of running \AlgB.  When computing the bandwidth cost for the decentralized network, we assume that the good IDs are connected in a sparse overlay network.  In particular, for epoch $i$, we assume that the number of edges in this network is $\tilde{O}(G_i)$ where $\tilde{O}$ indicates the omission of a poly($\log G_i$) = poly($\log n_0$) factor.  

\begin{lemma}\label{lemma:committee-cost}  Let $T_C$ and $T_B$ denote the total computational and bandwidth costs, respectively, incurred by the adversary. Let $\gnew$ denote the number of good IDs that have joined the system.  Then,  \AlgB has costs as follows: 
	\begin{itemize}
		\item The total computational cost to the good IDs is $O\left(T_C+ \gnew \right)$. 
       		\item The total bandwidth cost to the good IDs is  $\tilde{O}\left(T_B + \gnew \right)$.
	\end{itemize}
\end{lemma}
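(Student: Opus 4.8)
The plan is to split the good IDs' total cost into exactly the two operations the protocol ever performs on them — solving an \epuzzle on joining, and solving a \ppuzzle (together with the committee's consensus traffic) each time Step~2 of \AlgB fires — bound each piece separately, and then assemble the computational and bandwidth totals. First I would dispose of the entrance puzzles: each good ID solves a single $1$-round puzzle and diffuses the solution once when it joins, contributing $O(1)$ computational cost and one \Diffuse call. Summing over all good joins — and folding the $n_0$ initial IDs and the one-time initialization of~\cite{hou2017randomized} into this count, so that $\gnew \ge n_0$ — yields $O(\gnew)$ computational and $O(\gnew)$ bandwidth from this source.

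The core of the argument is an amortization (charging) argument for the purges. The purge ending epoch $i$ forces every ID present to solve one $1$-round puzzle and diffuse it; since the number of good IDs present is $O(N_{i-1})$, the good IDs pay $O(N_{i-1})$ computational cost and $O(N_{i-1})$ \Diffuse calls for submission, plus the committee's Byzantine-consensus traffic to agree on $r$, $\Current$, and the next committee, which is $\tilde O(1)$ per purge because the committee has size $\Theta(\log n_0)$ (Lemma~\ref{lem:good-committee}). The key structural fact is that Step~2 fires only once the committee's symmetric difference $|(\Current\cup\Old)-(\Current\cap\Old)|$ reaches $N_{i-1}/3$, i.e. once at least $N_{i-1}/3$ tracked join/departure events — namely $g_i^a + b_i^a + g_i^d$ — have occurred during epoch $i$. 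I would therefore charge each purge's $O(N_{i-1})$ cost to these $\ge N_{i-1}/3$ events, i.e. $O(1)$ per event, and sum over epochs, using that each event lies in a unique epoch.

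It then remains to classify the events and link them to the adversary. Good arrivals sum to $\gnew$. Good departures sum to at most $\gnew$, since every good ID departs at most once and must previously have joined (with initial IDs counted as joiners). Bad arrivals are the crucial bridge to the adversary's cost: each admitted bad ID had to solve and diffuse an \epuzzle, so the number of bad arrivals is $O(T_C)$ and simultaneously $O(T_B)$. Substituting, the total event count is $O(\gnew + T_C)$ for the computational accounting and $O(\gnew + T_B)$ for the bandwidth accounting; adding the entrance-puzzle totals gives purge computational cost $O(\gnew + T_C)$ and purge bandwidth $\tilde O(\gnew + T_B)$, the $\tilde O$ absorbing the per-purge consensus traffic (over at most $O(\text{events})$ epochs this adds only a polylog factor, and it never touches the purely puzzle-solving computational bound, which is why that bound is $O$ rather than $\tilde O$). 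This yields the claimed $O(T_C + \gnew)$ and $\tilde O(T_B + \gnew)$.

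The step I expect to be the main obstacle is the bookkeeping around \emph{departures}, not arrivals: a purge is triggered partly by good departures, which cost the adversary nothing, so one cannot route all purge cost through $T_C$ or $T_B$; the resolution is the observation that the total number of good departures is itself bounded by $\gnew$. A secondary delicate point is that the committee's view used to trigger a purge omits silent bad departures $b_i^d$, so I must be careful that the trigger threshold $N_{i-1}/3$ is met precisely by the events I can pay for ($g_i^a + b_i^a + g_i^d$), and that excluding $b_i^d$ only ever delays a purge, never advances it, so that the $O(1)$-per-event charge remains valid and the good population present at purge time stays $\Theta(N_{i-1})$.
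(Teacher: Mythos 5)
Your proposal reaches the stated bounds, but by a genuinely different route from the paper, so it is worth comparing the two. The paper never charges the purge cost to the symmetric-difference events at all. It bounds the cost of epoch $i$ by $G_{i-1}+g_i^a$, keys its analysis to the fact that a $1/3$ fraction of the old committee departs by the epoch's end, and splits into two cases: if at least a $1/6$ fraction of the departing committee members are good, then because the good committee members are a logarithmic-sized random sample of the good population, a Chernoff bound gives $g_i^d=\Theta(G_{i-1})$ w.h.p., and the epoch's cost is charged to good departures; if at least a $1/6$ fraction are bad, the adversary must have spent $\Theta(G_{i-1})$ computation to win those committee seats in the hash lottery, so $T_{i-1}=\Theta(G_{i-1})$ and the cost is charged to the adversary. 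Summing $Cost_i\leq c_1 g_i^d+c_2 T_{i-1}+g_i^a$ over epochs, with $\sum_i g_i^d=O(\gnew)$, gives the computational bound; for bandwidth the paper additionally assumes a sparse overlay with $\tilde O(G_i)$ edges and uses a record-breaking (minimum-so-far) argument to show committee formation costs $\tilde O(G_i)$ messages per epoch, then reuses the same epoch charging. Your argument is more elementary: it is deterministic (no sampling step, no w.h.p.\ lower bound on the adversary's cost of capturing committee seats), it invokes Lemma~\ref{lem:good-committee} only to make the per-purge consensus traffic $\tilde O(1)$, and it charges each purge directly to the $\geq N_{i-1}/3$ events behind the trigger $|(\Current\cup\Old)-(\Current\cap\Old)|\geq|\Old|/3$, which is in fact the trigger the pseudocode uses, whereas the paper's proof reasons about committee departures. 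What the paper's approach buys in exchange is robustness to a variant in which epochs end on committee churn rather than population churn; what yours buys is a cleaner, union-bound-free accounting tied to the algorithm as written.

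There is, however, one hole in your event classification. You assert that the threshold is met ``precisely by'' $g_i^a+b_i^a+g_i^d$ and that ignoring bad departures only ever delays a purge. That is true only of \emph{silent} bad departures. The model says bad IDs \emph{may} fail to announce; nothing prevents the adversary from having bad IDs in $\Old$ announce their departures, and each announced departure of an old bad ID inflates the symmetric difference and can advance the purge, yet falls in none of your three categories, so your $O(1)$-per-event charge does not cover it as stated. The fix stays inside your charging scheme: any bad ID in $\Old$ either survived the purge ending epoch $i-1$ --- hence the adversary spent at least one unit of computation on its \ppuzzle, an expenditure chargeable at most once, since after departing the ID must solve a fresh \epuzzle to return --- or it is one of the $O(n_0)=O(\gnew)$ initial IDs. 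Moreover each announcement is itself a \Diffuse call, so these events are also $O(T_B)$ for the bandwidth accounting. With this fourth event category added, your per-event charge and the final bounds $O(T_C+\gnew)$ and $\tilde O(T_B+\gnew)$ go through.
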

\begin{proof} 
We begin with computational cost.  In epoch $i$, the computational cost to the good IDs equals $G_i$, which is no more than $G_{i-1} + g^a_i$. By the end of epoch $i$, we know that a $1/3$ fraction of the IDs in $\mathcal{C}_{old}$ leave.  Let $Cost_i$ be the computational cost to the algorithm in epoch $i$, and let $T_i$ be the computation cost spent by the adversary in epoch $i$.  There are two cases. \medskip


\noindent
Case 1: At least a $1/6$ fraction of the IDs that leave the committee are good.  Since the good IDs in the committee are a logarithmic-sized random sample of the good IDs in the system, $g_i^d = \Theta(G_{i-1})$, \whp~by a standard Chernoff bound.   Thus, $Cost_i \leq c_1 g_i^d + g_i^a$ for some constant $c_1$. \medskip

\noindent
Case 2: At least a $1/6$ fraction of the IDs that leave the committee are bad.  Then, at the beginning of the epoch, the adversary must have incurred $\Theta(G_{i-1})$ computational cost in order to obtain positions in the committee for this constant fraction of bad IDs.  Thus, $T_{i-1} = \Theta(G_{i-1})$, and so $Cost_i \leq c_2 T_{i-1} + g_i^a$ for some constant $c_2$. \medskip

Combining these two cases, we have that for every epoch $i$, it is the case that $Cost_i \leq c_1 g_i^d + c_2 T_{i-1} + g_i^a$.  By summing over all epochs, we get that the total computational cost to the algorithm is no more than the following.

\begin{eqnarray*}
	\sum_i (c_1 g_i^d + c_2 T_{i-1} + g_i^a) & \leq & c_2 \sum_i  T_i + c_1\sum_i  g_i^d  +  \sum_i g_i^a\\
	& = & O\left(T_C+ \gnew \right) 
\end{eqnarray*}
\noindent where the last line follows since $\gnew  = \sum_i g_i^a = \Omega(g_i^d)$.


To analyze the bandwidth cost, first recall that we assume the IDs are connected in a network with $\tilde{O}(G_i)$ edges.  A record-breaking analysis shows that each edge in this network will be traversed by an expected $O(\log G_i)$ messages during the formation of a new \qcomm.  In particular, the source of each edge perceives a random sequence of $O(G_i)$ values, and in expectation $O(\log G_i)$ of these will be the smallest seen so far, and so will require a message.  Thus, the number of messages sent equals the number of minimum values seen in a sequence of $x$ independent and uniformly distributed random variables on the interval $(0,1]$.  This expectation is $O(\log x)$; see~\cite{grinstead2012introduction} for details.  Thus the total bandwidth cost during the formation of a new committee is $\tilde{O}(G_i)$.  

Since this is within logarithmic factors of the computational cost, we can apply the same analysis as for the computational cost to achieve the bound given in the lemma. 
\end{proof}

\noindent Pulling the pieces together from the previous sections, we can now prove Theorem~\ref{thm:main1}.
\begin{proof}
By Lemma~\ref{sec:committee-goodness}, with probability at least $1-O(1/n_0^{\gamma+1})$, the \qcomm has a majority of good IDs, has size $\Theta(\log n_0)$, and its members are communicated to all good IDs via \Diffuse; therefore, the \cgoal is met. Given that the \cgoal holds, by Lemma~\ref{lem:badlesshalf-full} the fraction of bad IDs is less than $1/2$ over the lifetime of the system; therefore, the \sgoal is met.  The computational cost and bandwidth cost follows directly from Lemma~\ref{lemma:committee-cost}. 
\end{proof}



\subsection{Enhancements to \AlgB}\label{sec:enhancements}

In this section, we present two enhancements to our main result. The first allows for a reduction in the amount of state that needs to be maintained by \qcomm members. The second illustrates how our results continue to hold when there is bounded communication latency.




\subsection{Reducing Cost of Tracking Departures}\label{sec:reducing-state}

In order to determine when an epoch should end, \AlgB implicitly requires that the committee communicate with each good ID to learn of its departure; recall our assumption in Section~\ref{sec:model-main} that good IDs alert the committee of their departure. Additionally, the committee must also calculate the symmetric difference between two (possibly large sets) $\Current$ and $\Old$. Unfortunately, both this tracking and calculation can be expensive.

To reduce this cost, we modify \AlgB such that the committee tracks the state of $O(\log n_0)$ IDs currently in the system; this set is referred to as a \defn{sample set}. By tracking how many IDs depart from the (small) sample set, the committee is able to determine the end of an epoch with less computational cost. We refer to this modified algorithm as \textsc{Enhanced}~\AlgB~(\AlgBRS).

The amount of churn experienced by the system is estimated through sample sets $\sOld$ and $\sCurrent$, which are generated by the committee using a hash function {\boldmath{$h'$}} computed using secure multi-party computation (MPC) such as in \cite{applebaum2010secrecy, beerliova2006efficient, bogetoft2009secure, damgard2006scalable, damgaard2008scalable, dani2014quorums, goldreich1998secure}. Note that $h'$ is unknown to the IDs and thus, the adversary cannot precompute the sample sets. The committee adds an ID, say $v$ to the sample set if the hash of its ID $h'(v) \leq \frac{c\log{n_0}}{|\Old|}$, for some constant $c>0$. A fresh sample set, {\boldmath{$\sOld$}}, is generated at the beginning of every epoch. As new IDs join the system, they are sampled by the committee using $h'$ and added to the current sample set, {\boldmath{$\sCurrent$}}. 

Initially, the committee knows the existing membership denoted by $\Current$, and so  $|\Current|=n_0$.  The committee generates $\sOld$ and updates the current sample set, $\sCurrent$ as described above. At any time during the execution, if $|(\sCurrent \cup \sOld) - (\sCurrent \cap \sOld)|$ exceeds $\sOld/4$, the system undergoes a purge test (Step 2). As before, the committee broadcasts $r$ to conduct \ppuzzle with a round of sufficient length that the computation time dominates the round trip communication time between the client and committee.Those IDs that fail to submit valid puzzle solutions during step 2 are de-registered --- that is, they are effectively removed from the system.  


\renewcommand*{\algorithmcfname}{Enhanced Commensurate Computation (\AlgBRS)}
\begin{figure}[t!]
\begin{minipage}[h]{1.0\linewidth}
\vspace{1pt}\begin{algorithm}[H]
\smallskip 
\hspace{-7pt}\noindent{}{$\sOld$:  sample set of IDs after most recent purge.}\\
\hspace{-7pt}\noindent{}{$\sCurrent$: current sample set of IDs}.\\
\hspace{-7pt}\noindent{}{$\Old$: set of IDs after most recent purge.}\\
\hspace{-7pt}\noindent{}{$\Current$: current set of IDs}.\\
\hspace{-7pt}\noindent{}{$r$: current random seed}.\\

\medskip
\hspace{-7pt}\noindent Do the following forever; the committee makes all decisions using Byzantine Consensus:
\begin{enumerate}[leftmargin=6pt]
	\item Each joining ID, $v$, generates a public key; solves an \epuzzle using that key and the current timestamp $\tstamp$ and broadcasts the solution via \Diffuse. Upon verifying the solution $s_v$, the committee\\ adds $v$ to $\Current$.  
	
	\item If $h'(v) \leq \frac{c\log{n_0}}{|\Old|}$, then add $v$ to $\sCurrent$.
	
	\item Ping all IDs in $\sOld \cup \sCurrent$ in every round, and remove those IDs that do not respond.

	\item If $|(\sCurrent \cup \sOld) - (\sCurrent \cap \sOld)| \geq |\sOld|/4$, then the current committee does the following:
	\begin{itemize}[leftmargin=12pt]
		\item Generate a random string $r$ and broadcast via \Diffuse.
		\item Sets $\Old$ and $\Current$ to the set of IDs that return valid solutions.
		\item Sets $\sOld$ and $\sCurrent$ to the set of IDs that return a valid solutions whose value under $h'$ is at most $ \frac{c\log{n_0}}{|\Old|}$
		\item Sets the current committee to be the set of IDs submitting the smallest solutions in the range $\left(\frac{1}{2^{(k+1)/d}}, \frac{1}{2^{k/d}} \right]$ for some $k\geq 1$, and diffuses this new committee to all good IDs.
	\end{itemize}
\end{enumerate}
\caption{}
\end{algorithm}
\end{minipage}
\vspace{-3pt}\caption{Pseudocode for \AlgBRS.}\label{alg:ccomrs}
\end{figure}


\subsubsection{Estimating the Symmetric Difference}

\noindent For an epoch $i$, suppose $\Old$ is the set of IDs in the system at the beginning of the epoch and $\Current$ is the set of IDs in the system at time $t$. Let {\boldmath{$f_N^i$}} denote the fraction of new IDs that remain active in the system through epoch $i$, and let {\boldmath{$f_L^i$}} denote the fraction of IDs that leave from $\Old$ in epoch $i$. Similarly, let {\boldmath{$f_N^{i'}$}}  denote the fraction of new IDs that remain active in $\sCurrent$ through epoch $i$, and {\boldmath{$f_L^{i'}$}}                   denote the fraction of IDs that leave from $\sOld$ in epoch $i$.

\begin{lemma}\label{lem:samp_bound}
For any epoch $i$, if $f_N^{i'} + f_L^{i'} \leq \frac{1}{4}$, then $f_N^i + f_L^i \leq \frac{1}{3}$ w.h.p.
\end{lemma}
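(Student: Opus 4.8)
\textbf{Proof plan for Lemma~\ref{lem:samp_bound}.}
The plan is to use the fact that each ID in the system is placed into the sample set independently with probability $p = \frac{c \log n_0}{|\Old|}$, so that the sample sets $\sOld$ and $\sCurrent$ are (roughly) uniform random samples of $\Old$ and $\Current$ respectively, each of expected size $\Theta(\log n_0)$. The quantities $f_N^i, f_L^i$ are population-level fractions (new arrivals retained, and departures from $\Old$), while $f_N^{i'}, f_L^{i'}$ are their empirical counterparts measured on the sample. The whole lemma is therefore a statement that the sample-based estimate of total churn closely tracks the true churn, and I would prove it via a concentration argument plus a contrapositive-style gap between the $1/4$ threshold on the sample and the $1/3$ threshold on the population.

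First I would express the hypothesis and conclusion in terms of counts. Let the number of IDs leaving $\Old$ be $L = f_L^i |\Old|$ and the number of retained new IDs be $M = f_N^i |\Old|$ (normalizing both fractions by $|\Old|$, matching how the symmetric-difference test in Step~4 compares against $|\sOld|$). Each such ID independently appears in the relevant sample set with probability $p$, so the sampled departure count $L'$ and sampled new-ID count $M'$ are sums of independent indicators with $E[L'] = pL$ and $E[M'] = pM$. Since $f_N^{i'} + f_L^{i'}$ is $(L' + M')/|\sOld|$ and $|\sOld| \approx p|\Old|$ with high probability (another Chernoff bound on the size of $\sOld$, whose mean is $\Theta(\log n_0)$), the empirical total-churn fraction concentrates around the true total-churn fraction $f_N^i + f_L^i$.

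The core step is then a two-sided Chernoff bound: I would argue the contrapositive, namely that if $f_N^i + f_L^i > 1/3$, then with high probability the sampled fraction $f_N^{i'} + f_L^{i'}$ exceeds $1/4$, contradicting the hypothesis. Concretely, conditioned on the true fraction being at least $1/3$, the expected sampled fraction is at least $(1/3)(1 - o(1))$ once the denominator $|\sOld|$ is controlled, which sits comfortably above $1/4$ by a constant multiplicative gap. A Chernoff bound on the deviation of $L' + M'$ from its mean $p(L+M) = \Theta(\log n_0)$ gives a failure probability of $O(1/n_0^{\gamma + 2})$ provided the constant $c$ governing the sample size is chosen large enough, which is exactly the kind of w.h.p.\ guarantee used throughout Lemma~\ref{lem:good-committee}. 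The gap between $1/3$ and $1/4$ is what provides the slack needed to absorb the $(1\pm\delta)$ multiplicative errors from both the numerator concentration and the $|\sOld|$-size concentration.

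The main obstacle I anticipate is bookkeeping the denominator correctly and handling the adversary's departing bad IDs. The fractions $f_N^i$ and $f_L^i$ are defined relative to $\Old$, but the sample test in Step~4 normalizes by $|\sOld|$, and $|\sOld|$ is itself a random variable rather than exactly $p|\Old|$; I would need to condition on the event that $|\sOld| = \Theta(\log n_0)$ (high probability) and carry its $(1\pm\delta)$ error through. More subtly, recall from the discussion preceding the lemma that bad IDs may depart without notifying the committee, so the ``leave'' count measured via pinging in Step~3 may undercount true departures from $\sOld$; I would argue, as in Lemma~\ref{lem:badlesshalf-full}, that undetected bad departures only \emph{reduce} the measured churn and the true fraction of bad IDs, and hence do not threaten the one-directional bound being claimed. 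Choosing the constant $c$ large enough relative to $\gamma$ to drive the sample size, and hence the concentration, is the final quantitative knob.
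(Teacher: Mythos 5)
Your plan is correct and follows essentially the same route as the paper: the paper also treats sample-set membership as independent Bernoulli trials with success probability $\frac{c\log n_0}{|\Old|}$ and applies a two-sided Chernoff bound (with $\delta = 1/24$ and $c \geq 1728\gamma$) to get $f_L^i + f_N^i \leq f_L^{i'} + f_N^{i'} + 2\delta \leq 1/4 + 1/12 = 1/3$ w.h.p., exploiting exactly the slack between $1/4$ and $1/3$ that you identify. Your contrapositive phrasing and multiplicative error bookkeeping are logically equivalent to the paper's direct additive-deviation argument, and your explicit handling of the random denominator $|\sOld|$ and of silent bad departures (which \AlgBRS in fact resolves by pinging all IDs in $\sOld \cup \sCurrent$ in Step 3) is, if anything, more careful than the paper's terse proof.
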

\begin{proof}
The probability that an ID becomes member of the sample set is $\frac{c\log{n_0}}{|\Old|}$. Fix a time step during an epoch.  Let $\Current$ be the set of IDs in the system
at that time step.  Then the size of the sample set will be $|\Current|\frac{c\log{n_0}}{|\Old|} \geq \frac{2}{3}c\log{n_0}$ in expectation. From Section 4.2.3 on Chernoff Bounds of \cite{mitzenmacher2017probability}, for constants $\delta = \frac{1}{24}$ and $c \geq 1728\gamma$, the following hold:
$$Pr(f^i_L \notin \lbrack f_L^{i'} - \delta,f_L^{i'} + \delta\rbrack ) < e^{\frac{-c\delta^2\log{n_0}}{2}} + e^{\frac{-c\delta^2\log{n_0}}{3}} $$
$$Pr(f^i_N \notin \lbrack f_N^{i'} - \delta,f_N^{i'} + \delta\rbrack ) < e^{\frac{-c\delta^2\log{n_0}}{2}} + e^{\frac{-c\delta^2\log{n_0}}{3}} $$ 
Thus, w.h.p. $f^i_L \in \lbrack f_L^{i'} - \delta,f_L^{i'} + \delta\rbrack$ and $f^i_N \in \lbrack f_N^{i'} - \delta,f_N^{i'} + \delta\rbrack$, using which we get
\begin{align*}
f_L^i + f_N^i &\leq f_L^{i'} + f_N^{i'} + 2\delta
\end{align*}
On solving the above inequality using the fact that $f_N^{i'} + f_L^{i'} \leq \frac{1}{4}$, we get $f_N^i + f_L^i \leq \frac{1}{3}$.
\end{proof}

\begin{lemma}{\label{trigBound}}
For any epoch $i$, if $f_N^{i'} + f_L^{i'} \geq \frac{1}{4}$, then $f_N^{i} + f_L^{i} \geq \frac{1}{6}$ w.h.p.
\end{lemma}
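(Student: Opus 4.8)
The plan is to prove this as the lower-bound companion to Lemma~\ref{lem:samp_bound}, reusing the very same Chernoff concentration established there. The crucial observation is that the two-sided bounds derived in the proof of Lemma~\ref{lem:samp_bound} --- namely that w.h.p. both $f_L^i \in [f_L^{i'} - \delta,\, f_L^{i'} + \delta]$ and $f_N^i \in [f_N^{i'} - \delta,\, f_N^{i'} + \delta]$ for $\delta = 1/24$ (which holds once $c \geq 1728\gamma$) --- do \emph{not} depend on whether the sampled churn lies above or below $1/4$. They hold for \emph{any} epoch, so I may invoke them verbatim here. The only difference between this lemma and Lemma~\ref{lem:samp_bound} is which direction of the concentration we extract.

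Given those bounds, the argument is a single inequality chain in the opposite direction from Lemma~\ref{lem:samp_bound}. Instead of the upper tails, I would use the lower tails $f_L^i \geq f_L^{i'} - \delta$ and $f_N^i \geq f_N^{i'} - \delta$, and sum them:
\[
f_L^i + f_N^i \;\geq\; (f_L^{i'} + f_N^{i'}) - 2\delta \;\geq\; \frac{1}{4} - \frac{2}{24} \;=\; \frac{1}{4} - \frac{1}{12} \;=\; \frac{1}{6},
\]
where the second inequality uses the hypothesis $f_N^{i'} + f_L^{i'} \geq 1/4$. This yields $f_N^i + f_L^i \geq 1/6$ w.h.p., taking a union bound over the two (individually w.h.p.) concentration events.

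The one point needing care --- and the closest thing to an obstacle --- is confirming the sample set is large enough for the Chernoff bounds to deliver the w.h.p. guarantee in \emph{this} regime. I would note that throughout an epoch the number of departures from $\Old$ is bounded by the epoch-termination condition, so $|\Current| \geq \tfrac{2}{3}|\Old|$ and hence the current sample set $\sCurrent$ has expected size $|\Current|\cdot\frac{c\log n_0}{|\Old|} = \Omega(\log n_0)$, exactly as assumed in the proof of Lemma~\ref{lem:samp_bound}; likewise $\sOld$ has expected size $c\log n_0$. Thus the identical Chernoff estimates apply unchanged, and nothing new must be derived.

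Conceptually I would close by flagging why the lemma matters: together with Lemma~\ref{lem:samp_bound} it shows that the sampled symmetric-difference trigger in \AlgBRS~is faithful to the true churn in both directions. Lemma~\ref{lem:samp_bound} guarantees the trigger does not fire too \emph{late} (true churn stays below $1/3$, preserving the Population Goal), while this lemma guarantees it does not fire too \emph{early} (true churn is at least $1/6$ whenever a purge is launched, so each purge is ``paid for'' by a constant fraction of genuine churn). This two-sided faithfulness is precisely what lets the cost analysis of \AlgB~carry over to \AlgBRS.
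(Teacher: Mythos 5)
Your proof is correct and takes essentially the same route as the paper's: the paper likewise invokes the concentration bounds from Lemma~\ref{lem:samp_bound} in the lower-tail direction, giving $f_N^i \geq f_N^{i'} - \delta$ and $f_L^i \geq f_L^{i'} - \delta$, and sums them to get $f_N^i + f_L^i \geq \frac{1}{4} - 2\delta \geq \frac{1}{6}$ with $\delta = \frac{1}{24}$. Your added verification that $\sOld$ and $\sCurrent$ retain expected size $\Omega(\log n_0)$ throughout the epoch is a detail the paper leaves implicit, but it changes nothing in the argument.
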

\begin{proof}
Following the same argument as Lemma \ref{lem:samp_bound}, we have $f_N^i \geq f_N^{i'} - \delta$ and $f_L^i \geq f_L^{i'} - \delta$. On summing the two inequalities, we get:
\begin{align*}
f_N^i + f_L^i &\geq f_N^{i'} - \delta + f_L^{i'} - \delta \geq \frac{1}{4} - 2\delta \geq \frac{1}{6}
\end{align*}
\end{proof}

\begin{theorem}{\label{theoRD}}
Assume that the adversary holds at most an $\alpha = 1/6$ fraction of the total computational power of the network. Then, w.h.p., algorithm \AlgBRS ensures the following properties hold  over the lifetime of the system:\vspace{5pt}
\begin{itemize}[topsep=0pt,parsep=0pt,partopsep=0pt,leftmargin=13pt,labelwidth=6pt,labelsep=4pt]
\item The \sgoal and \cgoal are maintained.
\item The cumulative computational cost to the good IDs is $O(T_C + g_{new})$.
\item The cumulative bandwidth cost to the good IDs is $\tilde{O}(T_B + g_{new})$.
\end{itemize}
\end{theorem}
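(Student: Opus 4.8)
The plan is to treat Theorem~\ref{theoRD} as a reduction to the analysis of \AlgB already carried out in Lemmas~\ref{lem:good-committee}--\ref{lemma:committee-cost}. The only structural difference between \AlgBRS and \AlgB is the rule that triggers a purge: \AlgBRS ends an epoch when the symmetric difference of the \emph{sample sets} $\sOld$ and $\sCurrent$ reaches $|\sOld|/4$, rather than when the symmetric difference of the full sets $\Old$ and $\Current$ reaches $|\Old|/3$. Lemmas~\ref{lem:samp_bound} and~\ref{trigBound} are precisely the two-sided bridge between these two quantities, so the bulk of the work is to show that each earlier argument depends on the full-set churn only through the bounds that these two lemmas now supply.

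First I would establish the \sgoal and \cgoal. The key observation is that, for every round preceding a purge, the sample-set trigger has not yet fired, i.e. $f_N^{i'}+f_L^{i'} < 1/4$; Lemma~\ref{lem:samp_bound} then gives $f_N^{i}+f_L^{i}\le 1/3$ w.h.p. Since $(f_N^i+f_L^i)|\Old|$ is exactly the full-set symmetric difference, this reproduces the invariant $b_i^a + g_i^d \le N_{i-1}/3$ (up to the worst-case single-round slack of $\epsilon_0 N_{i-1}$) that drove the proofs of Lemmas~\ref{lem:bound_b} and~\ref{lem:badlesshalf-full}, so the \sgoal argument giving a bad fraction below $1/2$ carries over, with constants chosen with enough slack to absorb the $\epsilon_0$ overshoot and the Chernoff error $\delta$ from the sampling lemmas. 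For the \cgoal, committee formation in Step~4 is identical to that of \AlgB, so Lemma~\ref{lem:good-committee} still yields a committee of size $\Theta(\log n_0)$ with a good fraction at least $7/10$ at the start of each epoch; since the true churn over an epoch is again bounded by $1/3$, the committee (a logarithmic random sample of the good IDs) loses at most a $1/3$-fraction of its members, and the same Chernoff argument preserves its good majority throughout the epoch.

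For the cost bounds I would re-use the amortization of Lemma~\ref{lemma:committee-cost}, which charges the $\Theta(G_i)$ cost of each purge either to the good arrivals and departures $g_i^a, g_i^d$ or to the adversary's in-epoch computation $T_{i-1}$. This accounting is valid only if each epoch witnesses $\Omega(N_{i-1})$ actual churn, which is now guaranteed by Lemma~\ref{trigBound}: when the sample-set trigger fires, $f_N^{i}+f_L^{i}\ge 1/6$, so $\Omega(N_{i-1})$ IDs have in fact joined or departed. Hence the aggregate puzzle and committee-formation cost remains $O(T_C+\gnew)$ and $\tilde{O}(T_B+\gnew)$. What is genuinely new is the per-round pinging of $\sOld\cup\sCurrent$ in Step~3. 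Since $|\sOld\cup\sCurrent| = O(\log n_0)$, and since Lemma~\ref{trigBound} forces each epoch to absorb $\Omega(n_0)$ churn events so that the lifetime spans only $O(n_0^{\gamma-1})$ epochs, I would charge the pinging incurred within each epoch to the churn that ends it, folding the polylogarithmic overhead into the $\tilde{O}(\cdot)$ term.

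I expect the pinging accounting to be the main obstacle: unlike the puzzle cost, it accrues every round rather than per event, so one must argue that rounds in which the small sample set is pinged can be charged to churn (or to the polylogarithmic slack) rather than accumulating unboundedly during quiescent periods. A second, more delicate point is that Lemmas~\ref{lem:samp_bound} and~\ref{trigBound} must hold \emph{simultaneously} at every round of every epoch, which requires a union bound over the lifetime together with the fact that $h'$ is computed by MPC and kept hidden from the adversary; this hiddenness is what prevents the worst-case adversarial departure schedule from correlating with sample-set membership and thereby biasing the estimators $f_N^{i'}, f_L^{i'}$ on which the entire reduction rests.
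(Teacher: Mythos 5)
Your proposal follows essentially the same route as the paper's proof: Lemma~\ref{lem:samp_bound} caps the true churn at $1/3$ whenever the sample-set trigger fires (so the \sgoal follows via Lemma~\ref{lem:badlesshalf-full}), committee formation is unchanged so Lemma~\ref{lem:good-committee} still delivers the \cgoal, and Lemma~\ref{trigBound} guarantees at least a $1/6$ churn fraction per epoch, so the amortization of Lemma~\ref{lemma:committee-cost} carries over with at most a constant (factor-$2$) loss in the computational bound. The two issues you single out as delicate --- charging the per-round pinging of the $O(\log n_0)$-size sample set during quiescent periods, and union-bounding the sampling lemmas over all epochs using the MPC-hiddenness of $h'$ --- are in fact passed over silently in the paper's own very brief proof, so your treatment is, if anything, more careful than the original.
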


\begin{proof}
From Lemma \ref{lem:samp_bound}, when the purge puzzle is triggered in   \AlgBRS, the system has seen at most $1/3$ churn. Hence, from Lemma \ref{lem:badlesshalf-full}, the Population Goal is satisfied. Also, since the sample set enhancements do not change the rules for forming committees, the Committee Goal is also satisfied.

For calculating the cumulative computational cost to the good IDs, note from Lemma \ref{trigBound} that when the purge puzzle is triggered, the fraction of churn in the system is at least $\frac{1}{6}$, which is half of the triggering fraction in \AlgB. Thus, extending the results of Theorem \ref{thm:main1} by multiplying the cumulative computational cost by $2$, we get that the cumulative computational cost to the good IDs is $O(T + g_{new})$ for \AlgBRS. 

The cumulative bandwidth cost follows from Lemma \ref{lemma:committee-cost}.
\end{proof}


\subsection{Handling Bounded Communication Latency}\label{sec:bounded-latency}

Previously, all messages were assumed to be delivered with negligible delay when compared with the time to solve computational puzzles. In this section, we relax this assumption to the following:   \textit{the adversary may choose to arbitrarily delay messages, subject to the constraint that they are delivered within a constant number of rounds $\Delta>0$. } Such a model of communication delay is often referred to as \defn{partial synchrony} or {\boldmath{$\Delta$}}\defn{-bounded delay} in the literature \cite{dwork1988consensus,pass2017analysis}. We address this issue in the following analysis for each of our algorithms.

\subsubsection{In \textsc{Distributed} \AlgB}

In \AlgB, messages are exchanged with good IDs every time the committee is constructed. Thus, we increase the wait time for receiving the solutions to the \ppuzzle by $2\Delta$.  To simplify the analysis of a $\Delta$-bounded delay in the network where the adversary holds  an $\alpha$-fraction of the total computational power, we can instead consider a $0$-bounded delay network, but where the adversary holds an increased fraction of the total computational power; this is addressed by Lemma \ref{boundedfrac}. 

\begin{lemma}{\label{boundedfrac}}
A $\Delta$-bounded network latency with an adversary that holds an $\alpha$-fraction of total computational power of the network is equivalent to a $0$-bounded network latency with an adversary that holds a $\frac{2\alpha\Delta+\alpha}{2\alpha\Delta + 1}$-fraction of the total computational power of the network.
\end{lemma}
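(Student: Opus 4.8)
The plan is to set up a reduction that charges the entire cost of the network delay to the adversary, so that a $\Delta$-bounded-delay execution looks, from the point of view of the purge outcome, exactly like a $0$-delay execution against a more powerful adversary. The key observation is the one already noted above: to be certain it has collected the puzzle solutions of every good ID, the committee must extend the solution-collection window by $2\Delta$ rounds --- one $\Delta$ for the diffused challenge $r$ to reach each good ID, and one $\Delta$ for that ID's solution to return. Hence the purge puzzle is effectively ``open'' for $1+2\Delta$ rounds rather than a single round. First I would argue that this extension does not change the number of solutions produced by the good IDs: a good ID needs only a single valid solution to remain in the system, and it produces that solution with one round of hashing regardless of how the surrounding delay is scheduled. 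The adversary, by contrast, learns $r$ no later than the good IDs (it may read diffused messages first) and may compute continuously for the full $1+2\Delta$ rounds before the deadline, delivering its own solutions with whatever timing it likes. Thus the adversary's effective puzzle-solving output over the window is scaled by a factor of $(1+2\Delta)$ while that of the good IDs is unchanged.

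Next I would make the bookkeeping precise. Let $P_g$ and $P_b$ denote the per-round hashing power of the good IDs and of the adversary, respectively, so that $\alpha = P_b/(P_b+P_g)$. In the $\Delta$-bounded execution the good IDs contribute their one-round output $P_g$ worth of solutions while the adversary contributes $(1+2\Delta)P_b$ worth, so the fraction of solutions --- and hence of IDs surviving the purge --- attributable to the adversary is $(1+2\Delta)P_b/\big((1+2\Delta)P_b+P_g\big)$. This is precisely the fraction produced in a $0$-delay execution by an adversary of power $P_b' = (1+2\Delta)P_b$, which therefore defines the equivalent adversary. Writing its fraction as $\alpha'=P_b'/(P_b'+P_g)$, substituting $P_b=\tfrac{\alpha}{1-\alpha}P_g$, and simplifying the denominator $(1+2\Delta)\alpha+(1-\alpha)=2\alpha\Delta+1$ yields
\[
\alpha' \;=\; \frac{(1+2\Delta)\alpha}{2\alpha\Delta+1} \;=\; \frac{2\alpha\Delta+\alpha}{2\alpha\Delta+1},
\]
which is the claimed expression. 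With the equivalence in hand, every guarantee proved for the $0$-delay model --- in particular the committee analysis behind Lemma~\ref{lem:good-committee} and the population bound of Lemma~\ref{lem:badlesshalf-full} --- applies verbatim with $\alpha'$ in place of $\alpha$.

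The step I expect to be the main obstacle is justifying the asymmetry rigorously, namely that the window extension helps only the adversary. On the good side I must confirm that no good ID is ever forced to discard a valid solution merely because of delay --- this is exactly why the window is widened by $2\Delta$ rather than $\Delta$ --- and that a good ID gains nothing by solving extra puzzles, since it remains a single ID with a single public key. On the bad side I must pin down the worst case for the adversary's usable compute time: it should receive no more than $1+2\Delta$ rounds, which requires appealing to the model assumptions of Section~\ref{sec:model-main} that messages are delivered within $\Delta$ and that the adversary may both read first and inject messages at will. Finally, I would state carefully what ``equivalent'' means --- that the distribution of the bad-to-good ratio among surviving IDs is identical in the two settings --- so that invoking the downstream lemmas is legitimate; care is also needed to note that the constant-factor slowdown of running purges over $1+2\Delta$ rounds does not affect the asymptotic cost bounds of Lemma~\ref{lemma:committee-cost}.
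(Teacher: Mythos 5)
Your proposal is correct and takes essentially the same route as the paper's proof: the purge window is widened by $2\Delta$ rounds, the adversary's effective hashing output is therefore scaled by $(2\Delta+1)$ while the good IDs' one-round output is unchanged (the paper compresses your asymmetry discussion into the single assumption that the bounded delay only benefits the adversary), and the identical normalization $\alpha' = \frac{(2\Delta+1)\alpha}{(2\Delta+1)\alpha + (1-\alpha)} = \frac{2\alpha\Delta+\alpha}{2\alpha\Delta+1}$ gives the claim. The additional rigor you flag --- that good IDs never forfeit valid solutions to delay and a precise meaning of ``equivalent'' --- elaborates on, but does not depart from, the paper's terse argument.
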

\begin{proof}
Recall from Section~\ref{sec:model-main}, the fraction of computational power with an adversary is $\alpha$-times the total computational power of the network. Given the $\Delta$-bounded delay in the network, the adversary has $2\Delta+1$ rounds to perform evaluations. Fix a time time step during an epoch.  Suppose $P$ is the total computational power of the network at that time step. Then, the number of evaluations of the hash function an adversary can make is $P\alpha(2\Delta + 1)\mu$. 

Note that the computational power of the good IDs remains the same, which is $P(1-\alpha)\mu$ since they obey the protocol, and we assume that the bounded delay only benefits the adversary. Thus, the $\Delta$-bounded delay model is equivalent to the $0$-bounded delay model with an adversary who holds a $\frac{2\alpha\Delta+\alpha}{2\alpha\Delta + 1}$-fraction of the computational power.
\end{proof}



\begin{lemma}{\label{bounded-dist}}
Assume a $\Delta$-bounded delay and an adversary that holds an $\alpha$-fraction of computational power, for $\alpha \leq 1/(10\Delta + 6)$. With probability at least $1-O(1/n_0^{\gamma+1})$, the fraction of good IDs in the committee exceeds $1/2$ for the duration of the epoch.
\end{lemma}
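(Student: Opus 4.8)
The plan is to prove this by reduction to the already-established zero-delay result, using the equivalence provided by Lemma~\ref{boundedfrac}. The central observation is that we need not reanalyze the committee-formation process from scratch under delay; instead we translate the $\Delta$-bounded delay model into a zero-delay model in which the adversary simply controls a larger effective fraction of the computational power, and then invoke Lemma~\ref{lem:good-committee} essentially verbatim.

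First I would apply Lemma~\ref{boundedfrac} to conclude that a $\Delta$-bounded network in which the adversary holds an $\alpha$-fraction of the computational power is equivalent to a $0$-bounded network in which the adversary holds an effective fraction $\alpha' = \frac{2\alpha\Delta + \alpha}{2\alpha\Delta + 1}$. The justification here rests on the fact that delay only benefits the adversary: over the extended window of $2\Delta + 1$ rounds it may perform additional hash evaluations, whereas good IDs, bound by the protocol and solving only $1$-round puzzles, derive no corresponding advantage. I would also note that increasing the purge-puzzle wait time by $2\Delta$ (as described in the preceding text) is what buys this equivalence without otherwise altering the protocol's behavior.

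Next I would verify the arithmetic that the hypothesis $\alpha \leq 1/(10\Delta + 6)$ forces $\alpha' \leq 1/6$. Rearranging $\frac{2\alpha\Delta+\alpha}{2\alpha\Delta+1} \leq \frac{1}{6}$ gives $12\alpha\Delta + 6\alpha \leq 2\alpha\Delta + 1$, i.e. $\alpha(10\Delta + 6) \leq 1$, which is precisely the stated bound. Thus in the equivalent zero-delay model the effective adversarial fraction lies within the regime covered by Lemma~\ref{lem:good-committee}. I would then invoke that lemma, which guarantees for any $\alpha' \leq 1/6$ that, with probability at least $1 - \tilde{O}(1/n_0^{\gamma+2})$, the committee throughout the epoch contains a majority of good IDs; since $1 - \tilde{O}(1/n_0^{\gamma+2}) \geq 1 - O(1/n_0^{\gamma+1})$, the desired conclusion follows.

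The main obstacle I anticipate is not the arithmetic but arguing cleanly that the equivalence of Lemma~\ref{boundedfrac} is strong enough to carry over every property Lemma~\ref{lem:good-committee} requires. In particular, I would want to confirm that the adversary's ability to read and delay messages during the single round of committee formation, together with the inflated wait window, is \emph{fully} captured by the single scalar inflation of $\alpha$ to $\alpha'$ — that is, that the reduction does not silently discard some timing-dependent effect on the concentration arguments for $X_{\mathcal{G},k}$ and $X_{\mathcal{B},k}$ that a mere scaling of the adversarial hash budget fails to model. Once that is pinned down, the lemma reduces entirely to the zero-delay case with the inflated fraction.
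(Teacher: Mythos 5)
Your proposal matches the paper's own proof essentially step for step: both reduce via Lemma~\ref{boundedfrac} to a $0$-bounded-delay model with effective adversarial fraction $\alpha' = \frac{2\alpha\Delta+\alpha}{2\alpha\Delta+1}$, verify that $\alpha \leq 1/(10\Delta+6)$ yields $\alpha' \leq 1/6$, and then invoke Lemma~\ref{lem:good-committee} to conclude a good majority in the committee throughout the epoch with the stated probability. Your explicit arithmetic check and your closing caveat about the reduction capturing all timing effects are sound additions, but the argument itself is the paper's argument.
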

\begin{proof}
From Lemma~\ref{boundedfrac}, the $\Delta$-bounded delay is equivalent to a $0$-bounded delay with  an adversary that holds a $\frac{2\alpha\Delta+\alpha}{2\alpha\Delta + 1}$-fraction of computational power of the network. By Lemma~\ref{lem:good-committee}, for a $0$-bounded delay model  with an adversary that holds an $\alpha'$-fraction of the computational power for $\alpha'\leq 1/6$, the fraction of good IDs exceeds $7/10$. Substituting $\alpha' = \frac{2\alpha\Delta+\alpha}{2\alpha\Delta + 1}$, the $\Delta$-bounded delay model with an $\alpha$-fraction of computational power has at least $7/10$-fraction of good IDs, for $\alpha \leq 1/(10\Delta+6)$.
\end{proof}

\subsubsection{In \AlgBRS}

In \AlgBRS, the purge puzzle is triggered when $|(\sCurrent \cup \sOld) - (\sCurrent \cap \sOld)| \geq |\sOld|/4$. Due to $\Delta$-bounded network latency, the adversary can choose to delay the departure notifications from the good IDs by $\Delta$ rounds, thereby delaying the detection of the need to purge. This may jeopardize the Population Goal and we address this issue here.

Recall from Section~\ref{sec:model-main} that the maximum fraction of good IDs that can depart in a round is $\epsilon_0$. Hence, the maximum fraction of good IDs that can depart in $\Delta$-rounds is $\Delta\epsilon_0$. In order to satisfy the Population goal, it is required that purge puzzles be issued when  $|(\sCurrent \cup \sOld) - (\sCurrent \cap \sOld)| \geq |\sOld|(1/4 - \epsilon_0 \Delta)$. Due to this modification, our computational and bandwidth costs will be increased by at most a constant factor. 


\section{Empirical Validation}\label{sec:empirical}

We simulate \AlgB~to evaluate its performance against a well-known~\POW~scheme named \AlgA~\cite{li:sybilcontrol}. The experiments allow us to (1) compare the computational cost incurred by the good IDs and (2) investigate the extent to which a Sybil Attack can impact the fraction of good IDs under these two algorithms.  


Our goal is to offer a proof of concept, not a full-fledged comparison. To this end, we simulate a centralized version of \AlgB from our preliminary work in~\cite{pow-without}.  In this version, the \qcomm is replaced by a single server that assumes the duties of the committee; the server unilaterally generates the random string $r$ and verifies solutions to entrance and purge puzzles (the use of membership intervals in Step 2 of \AlgB is no longer needed; recall Section~\ref{sec:com-mem}). This approach greatly reduces the number of parameters that we need to correctly estimate for our simulations, and it allows us to focus on obtaining a first-order approximation of the system. Because the distributed version has asymptotically equal resource costs, we expect similar performance for the distributed algorithm.

\noindent{\it Overview of \AlgA.} Under this algorithm, the number of bad IDs are limited through the use of computational puzzles. Each ID must solve a puzzle to join the system. Additionally, each ID tests its neighbors every 5 seconds with a puzzle, removing those IDs from its list of neighbors that failed to provide a solution within a time limit. It may be the case that an ID is a neighbor to more than one ID and thus, receives multiple puzzles to solve simultaneously; in this case, they are combined into a single puzzle whose solution satisfies all the received puzzles. We implement our own simulation of the \AlgA~algorithm. 

This comparison is not a straw man.  \AlgA, with its perpetual resource burning, is typical of existing PoW-based schemes; see Section~\ref{sec:related-work} for examples. One may consider enhancing \AlgA~via anomaly-detection techniques to determine when to use puzzle challenges.  However, bad IDs can hide by obeying protocol until an opportunity to attack presents itself.  Thus, such an approach is unlikely to provide acceptable security guarantees.

\medskip
 
\noindent{\it Points of Comparison.} To compare both algorithms fairly, we assume that the computational cost for solving a single \POW~is $1$ for both algorithms. Since in both algorithms require that a new ID solves a puzzle~to join the system, we refrain from measuring this computational cost. We let the fraction of computational power of the adversary, $\frac{g}{\alpha}$ be the same for both algorithms. We perform Monte-Carlo Simulations for generating plots for Section \ref{ppn} and \ref{ag}, where each observed value is averaged over 20 separate simulations.

We first examine performance on real-world Bitcoin data~\cite{neudecker-atc16,neudecker-fc17} in Section~\ref{bn}. In Section~\ref{ppn}, we present a comparative study on three real-world peer-to-peer networks, namely - BitTorrent, Skype and Bitcoin~\cite{Stutzbach:2006:UCP:1177080.1177105,guha2005experimental,neudecker-atc16,neudecker-fc17} . Finally, we study the effect of joins and departures on the two algorithms presented in  Section~\ref{ag}, where we simulate session lengths governed by Weibull Distribution.


\begin{figure*}[t!]
\captionsetup{justification=justified,margin=1.5cm}
\begin{subfigure}{.5\textwidth}
  \centering
  \includegraphics[trim = {0cm 1cm 0cm 1cm },width= 0.999\linewidth]{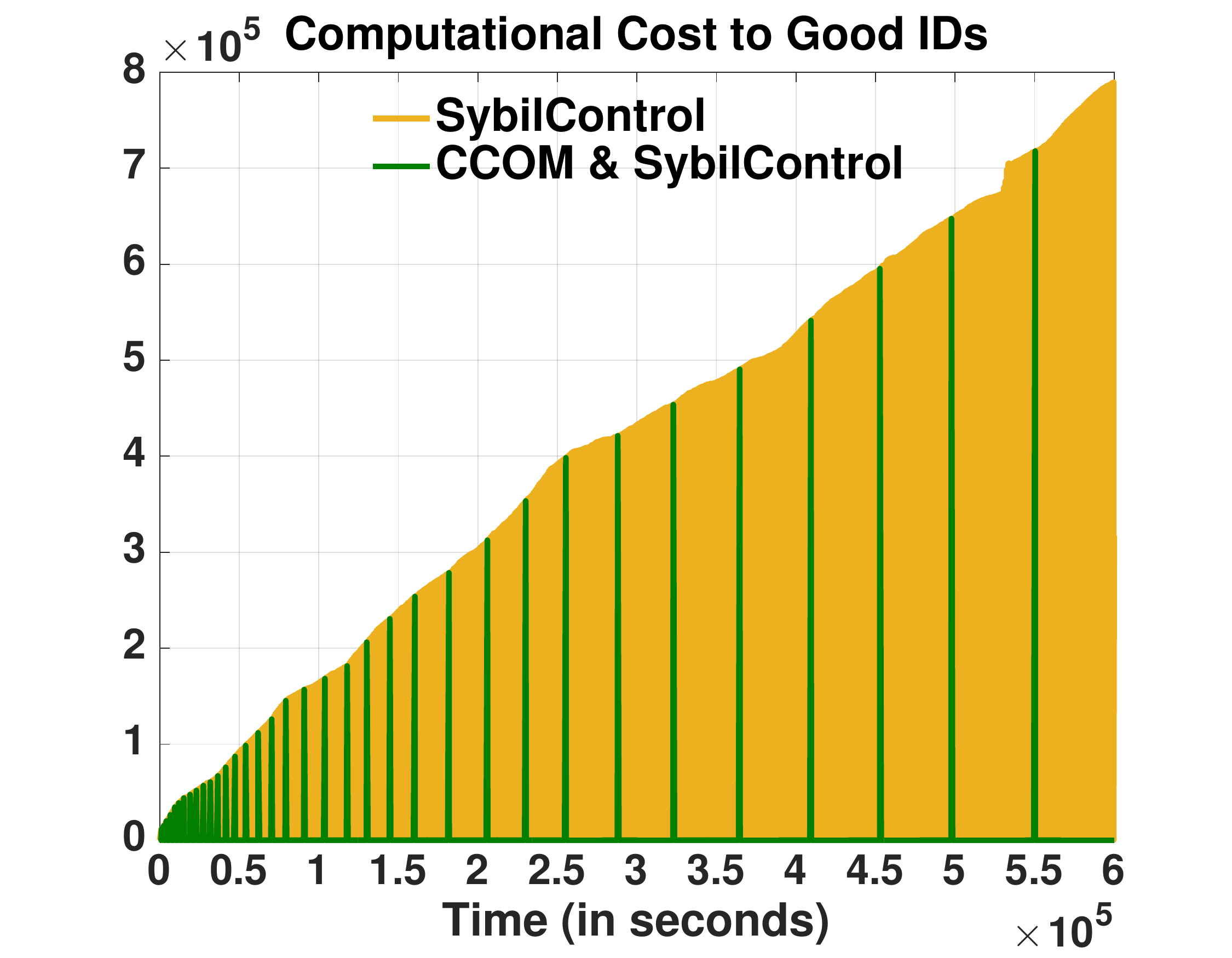}
  \caption{}
  \label{fig:1}
\end{subfigure}%
\begin{subfigure}{.5\textwidth}
  \centering
  \includegraphics[trim = {0cm 1cm 0cm 1cm },width= 0.999\linewidth]{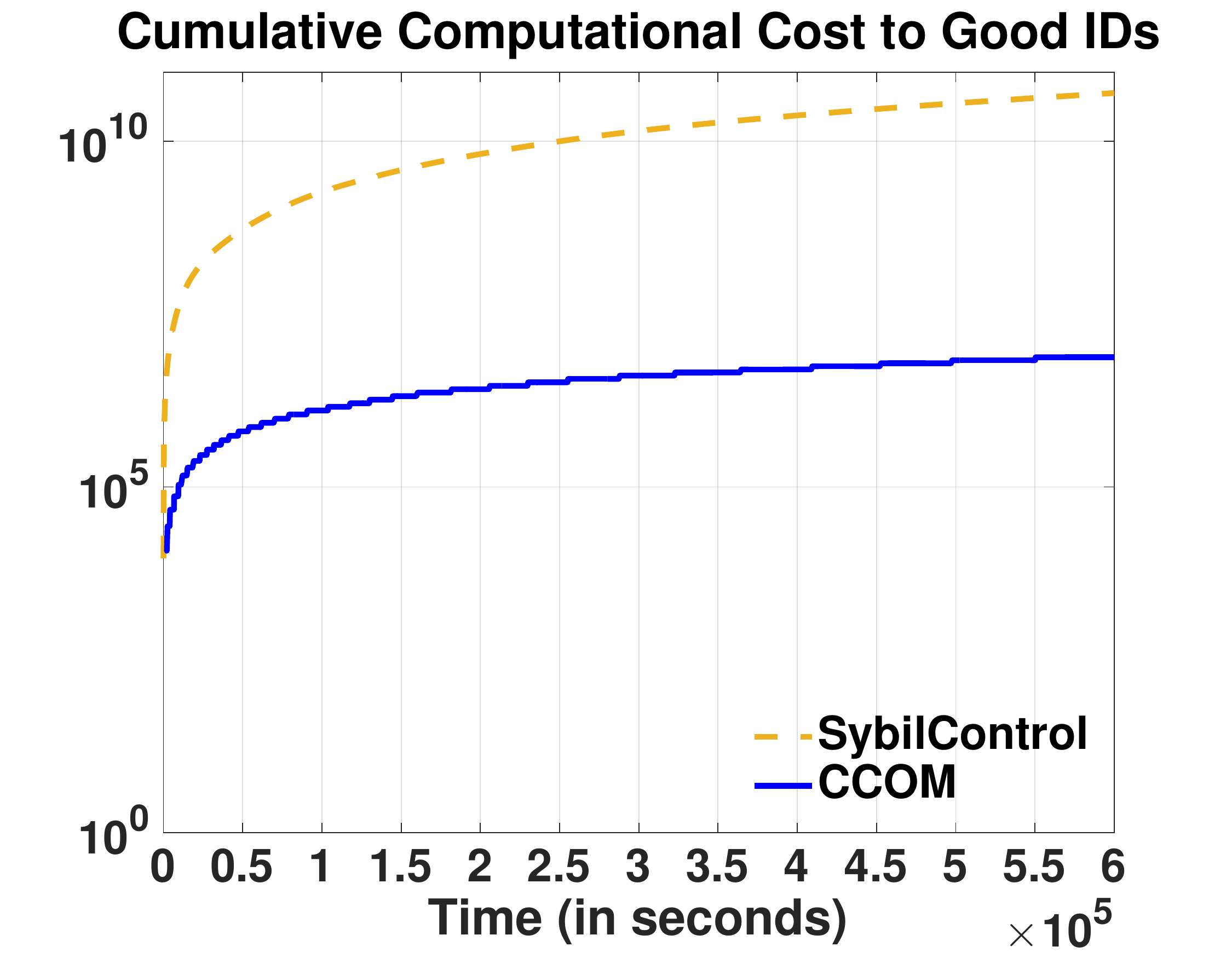}
  \caption{}
 \label{fig:2}
\end{subfigure}\vspace{-8pt}
\caption{Computational cost and cumulative computational for good IDs in absence of an attack using the Bitcoin Dataset.}
\label{fig:12}
\end{figure*}


\subsection{The Bitcoin Network}{\label{bn}}

We simulate~\AlgA~and~\AlgB~on a real-world dataset for the Bitcoin Network~\cite{7140490}. This data consists of roughly $1$ week of join/departure event timestamps.\footnote{The method of collection uses ongoing crawls from multiple seed machines in the Bitcoin network and captures IP addresses of reachable participants at $5$-minute intervals. Data obtained by personal correspondence with Till Neudecker (coauthor on~\cite{7140490}).}

The computational cost is examined under two scenarios: (1) in the absence of bad IDs (i.e., no attack) and (2) when bad IDs are present (i.e., an attack).  For our experiments, we assume that $\alpha = 1/10$.

For scenario (1), we assume all events in the dataset are the result of good IDs joining/departing the system. In Figure~\ref{fig:12}(a), the dense green area signifies the high frequency of computational cost paid by good IDs in~\AlgA~in comparison to the increasingly-spaced blue plot for \AlgB. Figure~\ref{fig:12}(b) shows the cumulative computational cost to the good IDs for~\AlgA~and \AlgB~in this scenario. {\bf Importantly, the cumulative cost to the good IDs is less by roughly 4 orders of magnitude} after $13$ hours of simulation time, and this gap continues to widen with time; note the logarithmic $y$-axis. This result demonstrates that our resource-competitive solution can be efficient in the absence of attack. 

For scenario (2), the same join and departure events occur as in scenario (1), but additionally an aggressive attack is orchestrated using the following adversarial strategy. From time $t/3$ seconds to $2t/3$ seconds, where $t = 604,970$, every $5$ seconds, the adversary adds a number of bad IDs that is a $1/3$ fraction of the current system size.  We note that the plot is over roughly $7$ days. Given this attack, a computational test is triggered in both algorithms every $5$ seconds, which aligns with the periodic testing in \AlgA.

In Figure~\ref{fig:34}(a), the dense green region represents the computational cost paid by the good IDs in \AlgA, the dense blue region represents that paid by good IDs in \AlgB and the red region by the adversary. Due to the setup of the stress test, the computational cost paid by good IDs in both the algorithms from $\frac{t}{3}$ to $\frac{2t}{3}$, is equivalent, thus the blue region overlaps the green.
Figure~\ref{fig:34}(b) depicts the cumulative cost to the good IDs for \AlgA, \AlgB, and the adversary, and we make two observations. First, the cost of \AlgB~is less than that of \AlgA. Second,  \AlgB's cost is indeed a function of the cost paid by the adversary; in contrast, without a resource-competitive guarantee, \AlgA's cost grows at a significantly faster rate (again, note the logarithmic $y$-axis).

\begin{figure*}[t!]
\captionsetup{justification=justified,margin=1.5cm}
\begin{subfigure}{.5\textwidth}
  \centering
  \captionsetup{justification=centering,margin=1.5cm}
  \includegraphics[trim = {0cm 1cm 0cm 1cm},width= 0.999\linewidth]{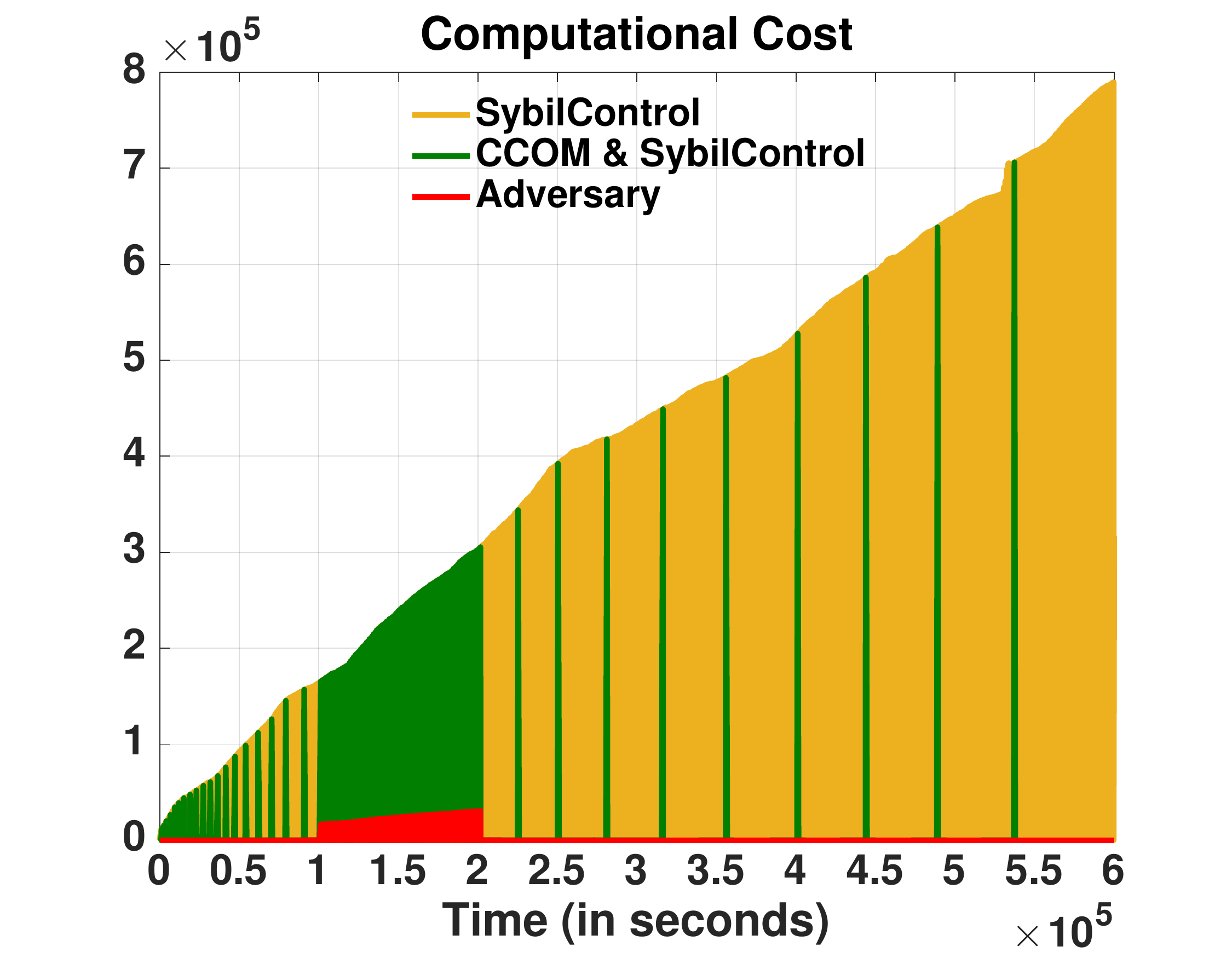}
  \caption{}
  \label{fig:3}
\end{subfigure}%
\begin{subfigure}{.5\textwidth}
  \centering
  \includegraphics[trim = {0cm 1cm 0cm 1cm },width= 0.999\linewidth]{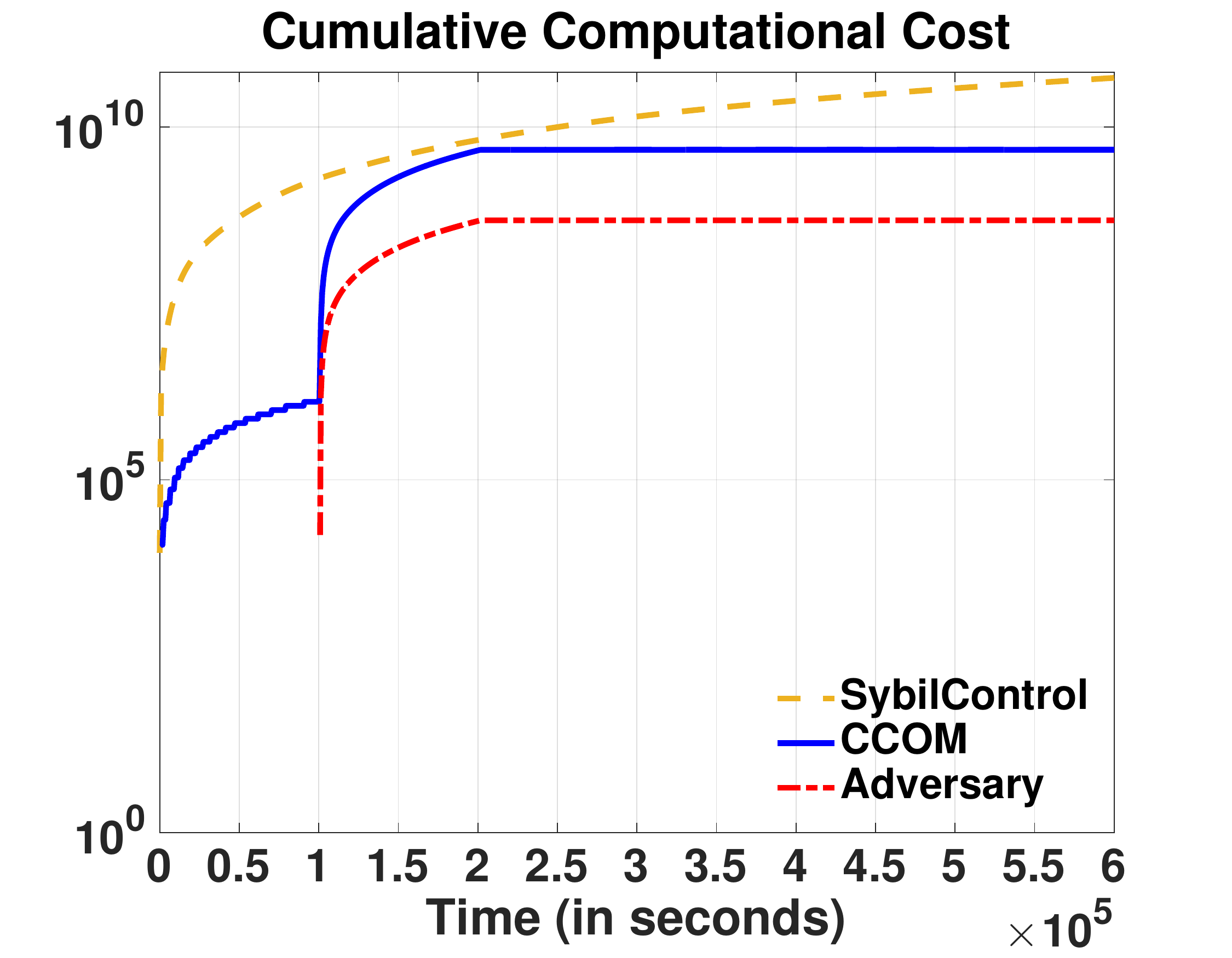}
  \caption{}
 \label{fig:4b}
\end{subfigure}\vspace{-8pt}
\caption{Computational cost and cumulative computational cost versus time under a large-scale attack using the  Bitcoin Dataset.}
\label{fig:34}
\end{figure*}

\setlength{\intextsep}{0pt}
\setlength{\columnsep}{0pt}
\begin{wrapfigure}[14]{r}[10pt]{.5\textwidth}
\centering
\vspace{-1cm}\includegraphics[width=0.9\textwidth]{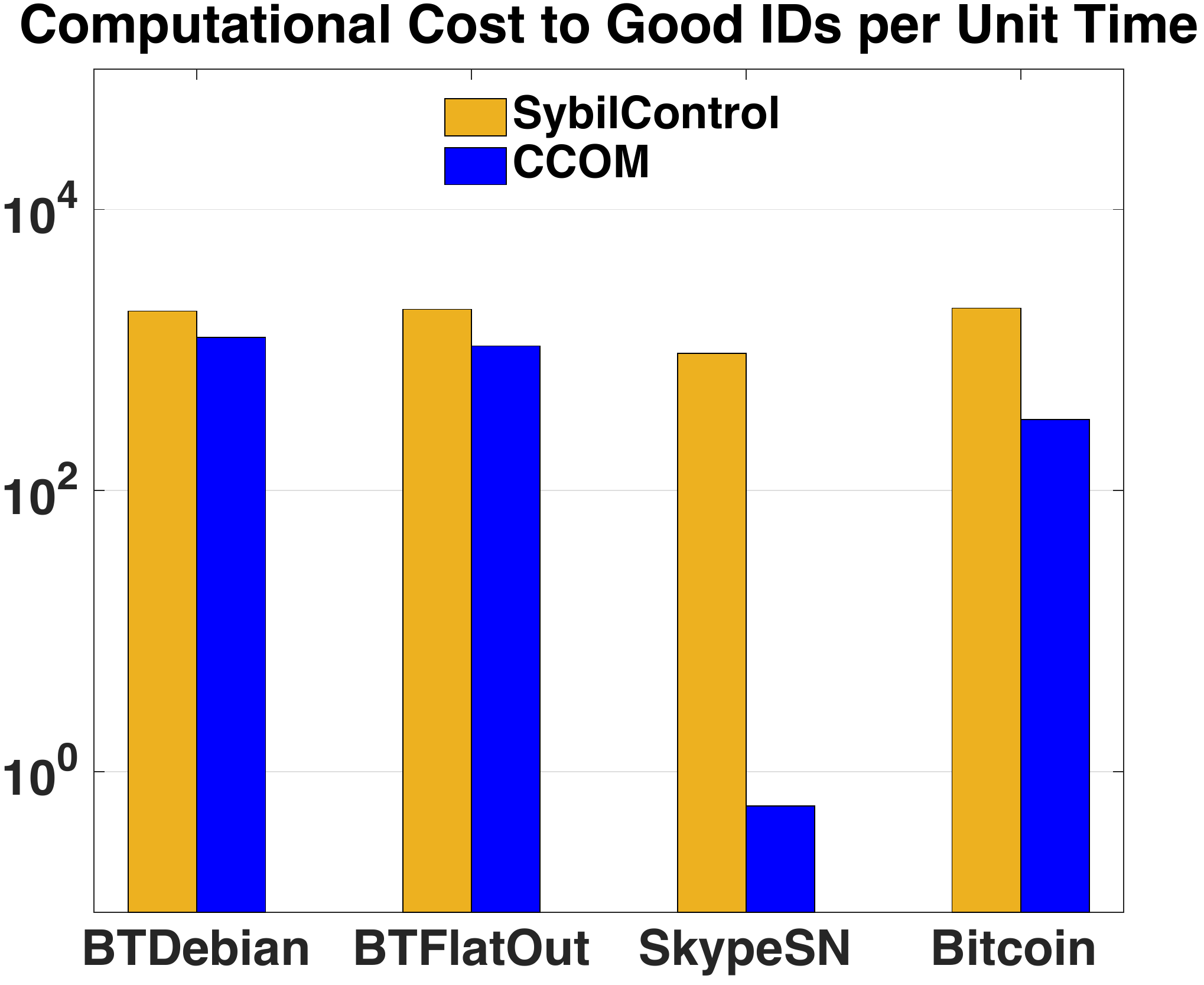}
\vspace{-0.2cm}\captionsetup{width=0.75\textwidth}
\caption{Computational Cost per unit Time for Peer-to-Peer Networks.}\label{fig:SCON-loses}
\end{wrapfigure}

\subsection{Application to BitTorrent and Skype}\label{ppn} 

To illustrate the savings obtained by our approach in additional network settings when there is no attack, we compare performance on three different networks, namely  BitTorrent, Skype and Bitcoin.  For Skype, we consider the network of supernodes; we refer to this by SkypeSN. For BitTorrent, we performed experiments on two BitTorrent overlays associated with (i) Debian ISO images, and (ii) a demo of the game FlatOut; we refer to these as BTDebian and BTFlatOut, respectively. 

For each network, respective session times are dictated by a Weibull distribution with parameters set according to the empirical findings in~\cite{Stutzbach:2006:UCP:1177080.1177105,guha2005experimental,7140490}. The shape ($k$) and scale ($\lambda$) parameter values of Weibull Distribution for session time of $k$ = 0.38 and $\lambda$ = 42.2 for Debian, and $k$ = 0.59 and $\lambda$ = 41.9 for FlatOut. The Skype supernodes had also have a Weibull Distribution for session time with median session time of 5.5 hours, and shape parameter of 0.64. We generate the session time for 10,000 good IDs from these parameter values and sample for the bitcoin network from the real-world data obtained from~\cite{neudecker-atc16,neudecker-fc17}.

The computational cost to the good IDs per unit time is plotted for \AlgA~and \AlgB in Figure~\ref{fig:SCON-loses}. Note again the logarithmic $y$-axis. We observe that \AlgB~outperforms \AlgA~in all four cases in terms of computational costs of the network. \AlgB~outperforms \AlgA~by 34.5\% in BitTorrent Debian, by 45.6\% in BitTorrent FlatOut, by 99.9\% in Skype Supernodes and 83.9\% in Bitcoin. We use the following definition of performance: $100
\cdot \left(1-\frac{\AlgB}{\AlgA}\right)$.

\begin{figure*}[t!]
\captionsetup{justification=justified,margin=1.5cm}
\begin{subfigure}{.5\textwidth}
  \centering
  \includegraphics[trim = {0cm 2cm 0cm 2cm }, width= 0.999\linewidth]{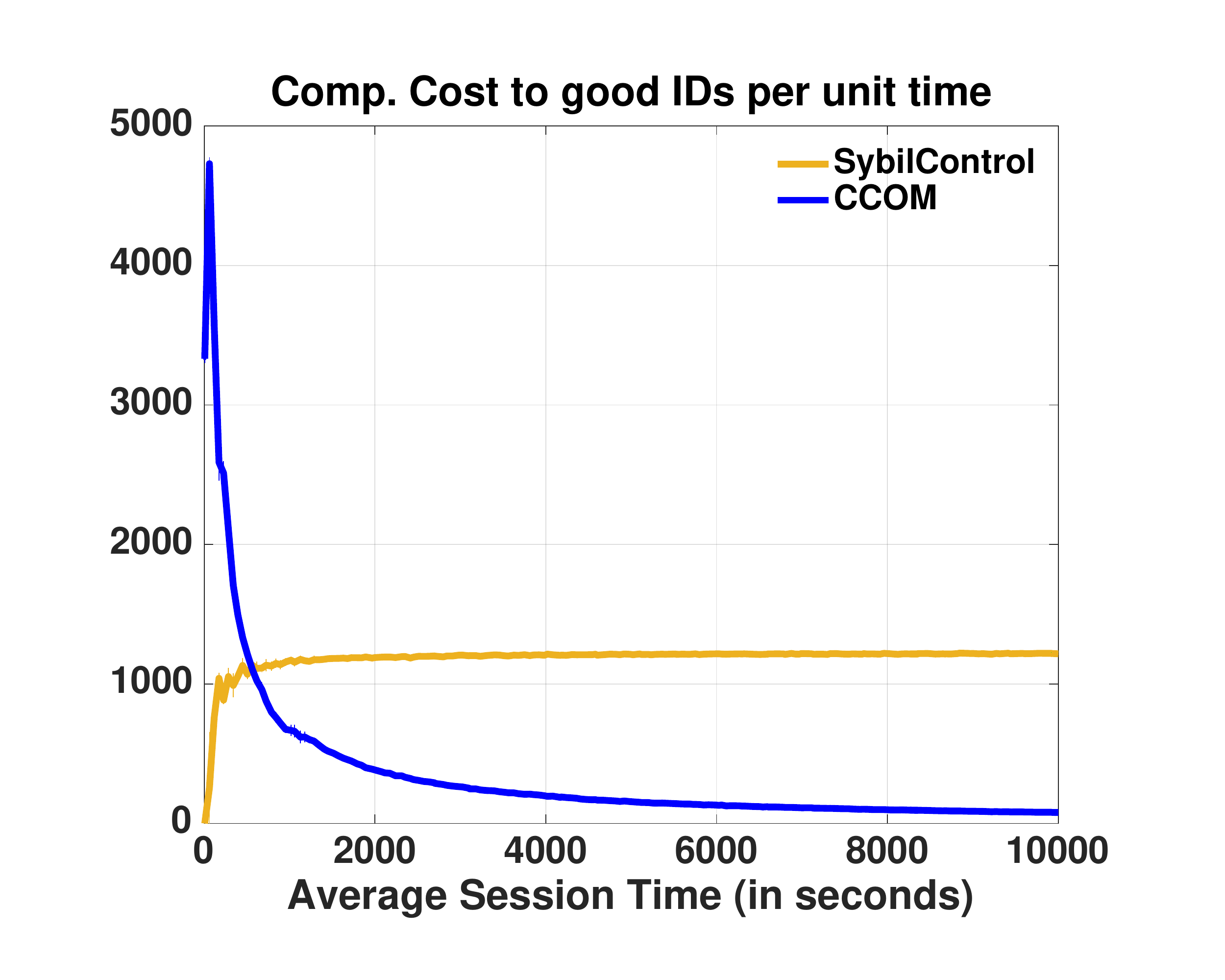}
  \caption{}
  \label{fig:4}
\end{subfigure}%
\begin{subfigure}{.5\textwidth}
  \centering
  \includegraphics[trim = {0cm 2cm 0cm 2cm },width= 0.999\linewidth]{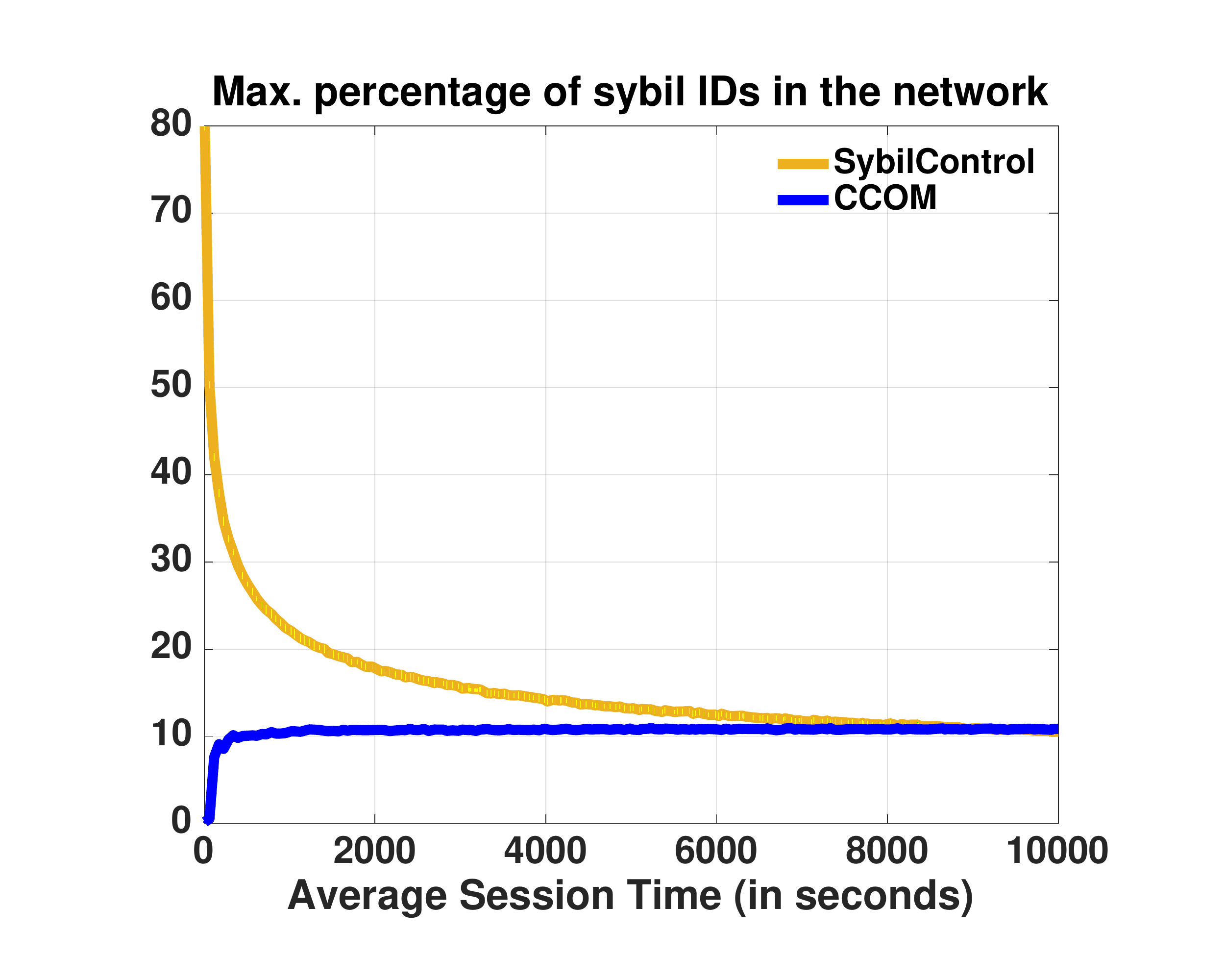}
  \caption{}
 \label{fig:5}
\end{subfigure}
\caption{Computational Cost per unit time and Maximum Percentage of bad IDs in system versus Average Session length.}
\label{fig:45}
\end{figure*}

\subsection{Effect of Joins and Departures}{\label{ag}}
 
 
We examine the effect of churn on the performance of \AlgB~versus \AlgA. Intuitively, \AlgB~should perform best relative to \AlgA~when the churn is low since purges will be infrequent (whereas \AlgA~is continually performing tests every $5$ seconds). In contrast, under significant churn, the performance gap between these two algorithms may narrow. Furthermore, when there is significant churn, \AlgA~may be unable to limit the number of bad IDs given that a $5$-second interval between tests is insufficient. In this section, we present the results of our experiments in order to better understand these aspects of performance.
 
Empirical studies of session lengths in many overlay networks indicate that the Weibull Distribution~\cite{weibull1951wide} is a better fit  than the often-used Poisson Distribution~\cite{Stutzbach:2006:UCP:1177080.1177105}. Hence, in our experiments, we used the Weibull distribution to generate the session lengths of IDs. 

The average session length ranged from as low as $0.1$ seconds to $10,000$ seconds.  The system was initialized with $10,000$ good IDs and the simulations were carried out until $15,000$ new IDs joined the network. Each new ID could be good or bad with probability $0.5$. For each value of average session length, we took the mean of our observations over $30$ runs. 

Our results are depicted in Figure~\ref{fig:45}. With the increase in the average session length (on the order of hours), \AlgB~is able to give similar guarantees as \AlgA~in terms of the maximum percentage of bad IDs in the system at much lower computation cost to the good IDs. On the other hand, when session length are small (order of minutes), \AlgB~is able to maintain a constant percentage of bad IDs in the system whereas \AlgA~is not able to give any such guarantees; indeed, the percentage of bad IDs can become unbounded.


\section{Applications}\label{sec:applications}

\subsection{Scalable Byzantine Consensus}

The problem of  Byzantine Consensus (BC) was introduced by Lamport, Shostak and Pease\cite{lamport1982byzantine}; this problem is also referred to as  Byzantine Agreement. There are $n$ IDs, of which some hidden subset are \emph{bad}.  These bad IDs may deviate from a prescribed algorithm in an arbitrary way. The remaining IDs are \emph{good}, and will follow our algorithm faithfully.   Each good ID has an initial input bit. The goal is for (1) all good IDs to decide on the same bit; and (2) this bit to equal the input bit of at least one good ID.

Byzantine consensus enables the creation of a reliable system from unreliable components.  Therefore, it is not surprising that BC is used in areas as diverse as: maintaining blockchains~\cite{BonneauMCNKF15,eyal2016bitcoin,cryptoeprint:2015:521,Micali16}; trustworthy computing~\cite{castro1998practical, castro2002practical,SINTRA,cachin:secure,kotla2007zyzzyva,clement-making, 1529992}; P2P networks~\cite{oceanweb,adya:farsite}; and databases~\cite{GPD,preguica2008byzantium, zhao2007byzantine} \smallskip

A major shortcoming of current algorithms for BC is that they do not scale.  For example, Rhea \textit{et al}. write: \emph{``Unfortunately, Byzantine agreement requires a number of messages quadratic in the number of participants, so it is infeasible for use in synchronizing a large number of replicas''}~\cite{rhea2003pond} (see also~\cite{cowling:hq},\cite{Cheng2009219}). This quadratic message cost hinders deployment in modern systems where the number of participants can be large.  
 
 Recent results reduce the total number of messages to $\tilde{O}(n^{3/2})$~\cite{PODC10, KSJACM}.  However, these algorithms (1) have high cost in practice; (2) are complicated to implement; and (3) require non-constructive combinatorial objects such as expander graphs. \medskip

\noindent{\bf How Our Result Applies.}  Our result allows us to reduce the communication cost for BC. The bad IDs are again incarnated as a single adversary with equivalent computational power. Each good participant has a single ID, while the adversary is not constrained in its creation of IDs. 

The members of the \qcomm~execute any BC algorithm that requires a quadratic number of messages; this implies a message cost of $O(\log^2 n)$ generated by the \qcomm. The \qcomm~then diffuses the signed consensus value to all other IDs. On receiving these agreed upon values from their \qcomm~members, the non-\qcomm~members take the majority of these verified consensus values. Since \qcomm~has a majority of good members, this results in all IDs holding the correct consensus value.

In this way, we are able to solve traditional BC with $\tilde{O}(n)$ number of messages in total. Additionally, we can solve a dynamic version of BC, where IDs are joining and leaving, and can do so with computational cost to the good IDs that is commensurate with the cost incurred by the adversary.

\subsection{\E: Committee Election}{\label{Ealgb}}


\E~is a secure  protocol~\cite{Luu:2016}, which aims to achieve agreement on a set of transactions in a blockchain i.e., the state of the blockchain.  Informally, the core idea is to partition the system into smaller fragments called committees, where each committee executes a BC protocol to agree upon its set of transactions. Then, the committee that was formed last --- referred to as the \defn{final committee} --- computes the final digest of all transactions in the system; these transactions having been received from other committees. This final digest is  broadcast to all other participants in the system. \medskip 

\noindent{\bf How Our Result Applies.} In order to reduce message cost,  \E~makes use of a special committee referred to as the \defn{directory committee (DC)}, which coordinates the formation of all other committees. We propose the election of the DC using \AlgB with the last portion of Line 1 is amended: there is no need for the committee to verify solutions or maintain $\Current$ (recall Section~\ref{sec:sys-mem}). This ensures the \cgoal and our cost results; that is, we can guarantee that (1) the committee contains a majority of good IDs for a polynomial number of join and leave events; and (2) bandwidth and computational costs grow commensurately with the costs of the adversary. 


\section{Conclusion}
We have described algorithms to efficiently use \POW~computational puzzles to reduce the fraction of bad IDs in open systems.  Unlike previous work, our algorithms require the good IDs to expend computational resources that grow only linearly with the computational resources expended by the adversary.  In particular, assume the adversary incurs computational cost $T_C$ and sends $T_B$ messages, and that $g_{new}$ good IDs enter the system.  Then our algorithm requires $O(T_C + g_{new})$ computational cost and sends $O(T_B + g_{new})$ messages.

Many open problems remain.  An immediate question is whether we can improve the asymptotic costs for our algorithm, or prove a lower bound of $\Omega(T + \gnew)$.  Our intuition is that the former is possible, although proving this appears to be non-trivial. For example, in order to reduce the frequency with which Byzantine Consensus is solved,  a plausible approach is to adopt the concept of a core set from~\cite{KKKSS}. Informally, a core set captures the notion that good IDs can have different views of {\it good} committee members, but that a sufficiently large overlap between these views exists such that correctness is still guaranteed.

One of particular interest is: Can we adapt our technique to secure multi-party computation? The problem of \defn{secure multi-party computation (MPC)} involves designing an algorithm for the purpose of computing the value of an $n$-ary function $f$ over private inputs from $n$  IDs $x_1, x_2,...,x_n$, such that the IDs learn the value of $f(x_1,x_2,...,x_n)$, but learn nothing more about the inputs than what can be inferred from this output of $f$.  The problem is generally complicated by the assumption that an adversary controls a hidden subset of the IDs.  In recent years, a number of attempts have been made to solve this problem for very large $n$ manner~\cite{applebaum2010secrecy, beerliova2006efficient, bogetoft2009secure, damgard2006scalable, damgaard2008scalable, dani2014quorums, goldreich1998secure}.  We believe that our technique for forming committees in a dynamic network could be helpful to solve a dynamic version of this problem, while ensuring that the resource costs to the good IDs is commensurate with the resource costs to an adversary.

As another avenue of future work, there are scenarios where it may be unreasonable to assume that ``good'' IDs always blindly follow our algorithm.  Instead, these IDs may be rational but selfish, in that they seek to optimize some known utility function.  The adversary still behaves in a worst-case manner, capturing the fact that the utility function of the adversary may be completely unknown.  This approach is similar to BAR (Byzantine, Altruistic, Rational) games~\cite{clement:theory} in distributed computing, see also~\cite{abraham:distributed, Aiyer:2005:BFT,Li:2006:BG,Vilaca:2012:ACN,Vilaca:NBT,Wong:2010:MBA}.  Extending our result to accommodate this model is of interest.

Finally, there is a substantial body of literature on attack-resistant overlays; for example~\cite{fiat:making,awerbuch:towards,awerbuch:random,awerbuch:towards2,young:practical,saia:reducing,young:towards,guerraoui:highly,naor:novel,naor_wieder:a_simple,sen:commensal,saad:self-healing,saad:self-healing2,awerbuch_scheideler:group,AnceaumeLRB08,HK}. These results critically depend on the fraction of bad IDs  always being upper-bounded by a constant less than $1/2$; however, there are only a handful of results that propose a method for guaranteeing this bound; see Section~\ref{sec:related-work}. A natural idea is to examine whether \AlgB can be applied to this setting in order to guarantee this bound (via the \sgoal).


\end{document}